\newtheorem{theorem}{Theorem}[section]
\newtheorem{proposition}{Proposition}[section]
\newtheorem{definition}{Definition}[section]
\newtheorem{remark}{Remark}[section]
\begin{document}

\title{Tensor-based Model Reduction and Identification for Generalized Memory Polynomial}

\author{Yuchao Wang\thanks{This paper was produced by the IEEE Publication Technology Group. They are in Piscataway, NJ.}
\thanks{Yuchao Wang is with the School of
	Mathematical Sciences, Fudan University, Shanghai, 200433, P. R. of
	China (e-mail: 21110180041@m.fudan.edu.cn). This author is supported by the Science and Technology
	Commission of Shanghai Municipality under grant 23JC1400501 and  the National Natural Science Foundation of China under grant 12271108. } and Yimin Wei \thanks{Yimin Wei is with the School of Mathematical Sciences and  Key Laboratory of Mathematics for Nonlinear
	Sciences,  Fudan University, Shanghai, 200433, P. R. of China (e-mail: ymwei@fudan.edu.cn).This author is supported by the National Natural Science Foundation of China under grant 12271108 and the Science and Technology
	Commission of Shanghai Municipality under grant 23JC1400501.}}

\markboth{IEEE TRANSACTIONS ON AUTOMATION SCIENCE AND ENGINEERING,~Vol.~xx, No.~xx, 2025}%
{Shell \MakeLowercase{\textit{et al.}}: A Sample Article Using IEEEtran.cls for IEEE Journals}


\maketitle
 
\begin{abstract}
Power amplifiers (PAs) are essential components in wireless communication systems, and the design of their behavioral models has been an important research topic for many years. The widely used generalized memory polynomial (GMP) model suffers from rapid growth in the number of parameters with increasing memory depths and nonlinearity order, which leads to a significant increase in model complexity and the risk of overfitting. In this study, we introduce tensor networks to compress the unknown coefficient tensor of the GMP model, resulting in three novel tensor-based GMP models. These models can achieve comparable performance to the GMP model, but with far fewer parameters and lower complexity. For the identification of these models, we derive the alternating least-squares (ALS) method to ensure the rapid updates and convergence of model parameters in an iterative manner. In addition, we notice that the horizontal slices of the third-order data tensor constructed from the input signals are Vandermonde matrices, which have a numerically low-rank structure. Hence, we further propose the RP-ALS algorithm, which first performs a truncated higher-order singular value decomposition on the data tensor to generate random projections, then conducts the ALS algorithm for the identification of projected models with downscaled dimensions, thus reducing the computational effort of the iterative process. The experimental results show that the proposed models outperform the full GMP model and sparse GMP model via LASSO regression in terms of the reduction in the number of parameters and running complexity.
\end{abstract}

\def\abstractname{Note to Practitioners}
\begin{abstract}
	This paper is motivated by the demand for low-complexity and accurate PA behavioral models, offering a viable option for modeling wideband PAs in modern wireless communication systems. A good model should provide satisfactory performance with as low complexity as possible and allow for quick identification of model parameters. Based on the advantages of tensor networks, this paper proposes three tensor-based GMP models that can utilize the potential low-rank tensor structure of the GMP model, and derives the alternating least-squares method to rapidly identify the model parameters. These models can efficiently reduce the number of parameters and running complexity while maintaining comparable accuracy to the GMP model.
\end{abstract}

\begin{IEEEkeywords}
Power amplifier, generalized memory polynomial, tensor network, randomized projection.
\end{IEEEkeywords}

\section{Introduction}
\IEEEPARstart{I}{n} wireless communication systems, RF PAs are crucial components for boosting the power of transmitted signals, but they also cause nonlinear distortions when operating near saturation. Digital predistorters \cite{ghannouchi2009behavioral, kim2001digital, ding2004robust}, which have gained widespread popularity, are employed to compensate for the distortions provided that the nonlinear behaviors of the PAs can be accurately modeled and estimated. Theoretically, Volterra series \cite[Chapter 6]{2014Behavioral} can approximate to arbitrary precision any nonlinear system with fading memory, but is not ideal for characterizing PAs due to their tremendous parameters, which greatly increase the computational complexity and data resources. Many simplified models, e.g., memory polynomial \cite{kim2001digital, ding2004robust} and generalized memory polynomial (GMP) model \cite{Morgan20063852} have emerged in the past two decades. However, the demand for low-complexity and high-precision PA models has been increasing with the development of communication technology. Tensor networks and compressed sensing are two effective mathematical methods for model reduction that have been implemented to improve some existing PA models.

Tensor networks, including
CANDECOMP/PARAFAC (CP) model \cite{hitchcock1927expression}, Tucker decomposition \cite{tucker1966some} and the tensor-train decomposition \cite{oseledets2011tensor}, have been extensively studied and successfully applied to machine learning \cite{luo2021adjusting, wang2021augmented, shen2022robust,lee2021qttnet}, signal processing \cite{sidiropoulos2017tensor, chang2022general, cichocki2015tensor} and many other fields. It is worth noting that the experts formulated the Volterra series model in tensorial format using tensor-vector mode product, and leveraged tensor networks to represent the coefficient kernels, leading to the tensor-based Volterra models \cite{favier2009parametric, Favier201230, Batselier201726, Batselier2016, chen2017tensor}. To some extent, these models mitigate the fatal problems of Volterra series that the number of parameters and model complexity exponentially explode with memory depth and nonlinearity order, but their identification cost and running complexity are still too high in some scenarios.

To reduce the number of parameters and model complexity, compressed sensing theory has been applied to the sparse recovery of some existing PA models as well \cite{becerra2020sparse, abdelhafiz2014digital, reina2015behavioral, becerra2018doubly, li2016sparsity, wang2023pruning}. For instance, the compressed sampling matching pursuit algorithm was employed for various well-known predistorters to reduce their complexity \cite{abdelhafiz2014digital}. Under the sparsity assumption for the coefficient kernels, the greedy algorithm is adopted for the sparse recovery of the Volterra series model \cite{becerra2020sparse,becerra2018doubly,reina2015behavioral}. In \cite{wang2023pruning}, the least absolute shrinkage and selection operator (LASSO) regression is exerted for pruning redundant terms in the GMP model.

The GMP model introduced by Morgan et al. \cite{Morgan20063852} is a widely used behavioral model for PAs. It has attracted much attention from researchers and practitioners. In 2010, the paper \cite{tehrani2010comparative} presented a comparative analysis of the state-of-the-art PA behavioral models, and emphasized that the GMP model has the best trade-off for accuracy versus complexity. The determination of the optimal nonlinearity order and memory depths for GMP model is also studied by employing various algorithms, e.g., hill-climbing heuristic \cite{wang2018novel}, genetic algorithm \cite{mondal2013genetic}, particle swarm optimization \cite{abdelhafiz2018generalized} and artificial bee colony optimization \cite{deepak2020identification}. From the perspective of reducing the model complexity, some modified models are proposed by deleting some cross terms or splitting the coefficients based on GMP model \cite{chen2021modified, liu2013robust}, and the LASSO regression is applied to extract the sparsity of GMP model \cite{wang2023pruning}, but the computational cost of model identification increases in the meantime. In this paper, to exploit the potential tensor structure of the GMP model, we introduce three tensor-based GMP models that can achieve comparable accuracy but with far fewer model parameters and lower complexity, and derive fast algorithms to identify these models. Our work can be mainly divided into the following parts:
\begin{enumerate}
	\item{We propose new descriptions of GMP model using tensor networks, termed GMP-CP, GMP-TT and GMP-Tucker model respectively. Through the complexity analysis and experimental validation, these models can efficiently reduce the number of parameters and model complexity while maintaining the comparable accuracy.}
	\item{The identifications of these tensor-based GMP models become multilinear least-squares problems, we derive the alternating least-squares (ALS) algorithm to solve them in an iterative manner, which ensures rapid updates and convergence of model parameters.}
	\item{We use the randomized projections, computed from the truncated higher-order singular value decomposition (HOSVD) on the data tensor, to reduce the model dimensions. Accordingly, the computational cost of iterative process for model identification is further reduced.}
\end{enumerate}

This paper is organized as follows. Section \ref{sec. 2} introduces the related works, including the basic tensor theory and GMP model. In Section \ref{sec. 3}, we propose the GMP-CP, GMP-TT, and GMP-Tucker models, and deduce the ALS algorithm for their identifications; the model complexity analysis is also presented. In Section \ref{sec. 4}, we demonstrate how to reduce model dimensions by random projections if the fast truncated HOSVD on the data tensor is performed, based on which, an algorithm called RP-ALS is proposed to further accelerate the model identification process. In Section \ref{sec. 5}, we present experimental results on simulated PA datasets, demonstrating the superior performance of our models and algorithms. Finally, we summarize our findings and conclusions in Section \ref{sec. 6}.

\section{Notations and Preliminaries}
\label{sec. 2}
Tensors are multidimensional arrays. Specifically, a vector is a first-order tensor and a matrix is a second-order one. The notation $\mathbb{F}^{I_1 \times I_2 \times \cdots \times I_d}$ denotes the set of $d$th-order tensors of size $I_1 \times I_2 \times \cdots \times I_d$ over number field $\mathbb{F}$, which can be the real number field $\mathbb{R}$ or the complex number field $\mathbb{C}$. The dimensions $I_k \in \mathbb{N}$ for $k=1,2,\ldots, d$ are called tensor modes. Unless specifically stated in the sequel, we always denote by calligraphic capital letters $\mathcal{X}, \mathcal{S}, \ldots$ the high-order tensors, while boldfaced capital letters $\mathbf{A}, \mathbf{B}, \ldots$ for matrices, boldfaced lowercase letters $\mathbf{a}, \mathbf{b}, \ldots$ for vectors, and lowercase letters for scalars. The entries of tensor $\mathcal{X} \in \mathbb{C}^{I_1 \times I_2 \times \cdots \times I_d}$ are accessed by $\mathcal{X}_{i_1 i_2 \cdots i_d} \in \mathbb{C}$, and the elements of matrix $\mathbf{A}$ are denoted by $a_{i j}$ or $(\mathbf{A})_{i j}$, analogously for vectors. The mode-$k$ unfolding matrix (see Definition \ref{Def: modekUnfold}) of a tensor $\mathcal{X}$ is denoted by $\mathbf{X}_{(k)}$. The conjugate transpose of a matrix $\mathbf{A}$ is denoted by $\mathbf{A}^*$. The colon notation `$:$' is used to indicate the free indices (with the same meaning as in MATLAB), for example, $\mathcal{X}_{i_1 \cdots i_{k-1}: i_k \cdots i_d}$ is a vector obtained by fixing all indices but the $k$th one, which is called the mode-$k$ fiber of $\mathcal{X}$. Table~\ref{table: Nomenclature} shows the frequently used common notations in this article.

\begin{table}[htbp!]
\centering
\caption{Nomenclature}
\label{table: Nomenclature}
\begin{tabular}{|c|l|}
	\hline $x(t),y(t)$ & 
	The input, output complex signals of PA models \\
	\hline $\mathcal{S}$ & 
		\makecell[l]{The third-order tensor formed by the  coefficients \\ of GMP model} \\
	\hline $\mathcal{X}$ & 
		\makecell[l]{The fourth-order tensor computed from the basis \\ functions of GMP model} \\
	\hline $\mathcal{M}$ & \makecell[l]{The third-order tensor computed from the basis\\ terms of tensor-based GMP models} \\
	\hline $\widetilde{\mathcal{M}}$ & The core tensor of truncated HOSVD of $\mathcal{M}$ \\
	\hline $\widehat{\mathcal{M}}$ & The approximate tensor of $\mathcal{M}$ from truncated HOSVD  \\
	\hline $\mathbf{H}$ & The matrix formed by the input complex signals \\
	\hline $\mathbf{A}, \mathbf{B}, \mathbf{C}, \mathcal{B}$ & The parameters of tensor-based GMP models \\
	\hline $\mathbf{I}$ & The identity matrix \\
	\hline $\mathbf{y}$ & The vector formed by output signals in a time interval \\
	\hline$M_1, M_2$ & The memory depths of GMP model \\
	\hline $P$ & The nonlinearity order of GMP model \\
	\hline$\|\cdot\|_F$ & \makecell[l]{The Frobenious norm of any order tensors, and is \\ equivalent to the $\ell_2$ norm for vectors} \\
	\hline$\|\cdot\|_2$ & The spectral norm of matrices \\
	\hline$\|\cdot\|_1$ & \makecell[l]{The $\ell _1$ norm of vectors, i.e., the sum of the modulus of \\ all components} \\
	\hline$\|\cdot\|_{\infty}$ & \makecell[l]{The infinite norm of vectors, i.e., the maximum of the \\ modulus of its components} \\
	\hline vec &  The vectorization operation for tensors \\
	\hline $\circ, \otimes$, $\odot$ &  The outer, Kronecker, Hadamard product respectively \\
	\hline	
\end{tabular}
\end{table}

For the visualization of tensor networks, we use the following graphical representations of tensors and multiplication. As shown in Fig. \ref{Graph: 1}, a blue block attached to several lines represents a tensor, each line represents an index, and the common lines between two connected blocks represent the contraction with respect to these indices.
\begin{figure}[htbp!]
	\centering
	\subfloat{
		\includegraphics[width=0.0475\textwidth]{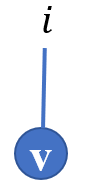}} \quad 
	\subfloat{
		\includegraphics[width=0.08\textwidth]{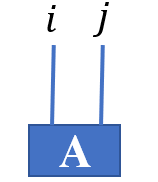}} \quad
	\subfloat{
		\includegraphics[width=0.12\textwidth]{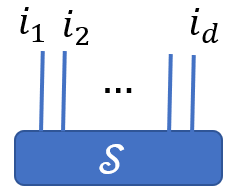}}  \quad
	\subfloat{
		\includegraphics[width=0.135\textwidth]{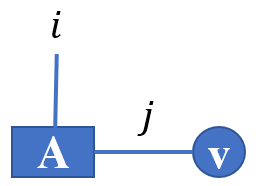}} 
	\caption{The graphical representations of a vector $\mathbf{v}$, a matrix $\mathbf{A}$ and a $d$th-order tensor $\mathcal{S}$, as well as the multiplication $\mathbf{Av}$.}
	\label{Graph: 1}
\end{figure}

\subsection{Basic tensor operations}
We now introduce some classical tensor operations from the literature \cite{kolda2009tensor, oseledets2011tensor}. Tensor contractions, also known as the Einstein summation, arise naturally in various fields such as differential geometry, quantum physics, and high-dimensional data analysis. For two tensors with some equal dimensions, the contraction can be performed on these modes; for example, if $\mathcal{X} \in \mathbb{C}^{N \times I_1 \times I_2 \times I_3}$ and $\mathcal{S}\in \mathbb{C}^{I_1\times I_2 \times I_3}$, then their contraction on the $\{2,3,4\}$-th modes of $\mathcal{X}$ with the $\{1,2,3\}$-th modes of $\mathcal{S}$ becomes a vector $\mathcal{X}\times _{2,3,4}^{1,2,3}\mathcal{S} \in \mathbb{C}^{N}$, to be precise,
\begin{equation*}
	\left(\mathcal{X}\times _{2,3,4}^{1,2,3}\mathcal{S}\right)_n = \sum_{i_1=1}^{I_1}\sum_{i_2=1}^{I_2}\sum_{i_3=1}^{I_3}  \mathcal{X}_{ni_1i_2i_3}\mathcal{S}_{i_1i_2i_3}.
\end{equation*} 

\begin{definition}[Tensor mode-$k$ product]
	The tensor mode-$k$ product is a special case of tensor contractions. The mode-k product of a tensor $\mathcal{X}\in \mathbb{C}^{I_1\times I_2\times\cdots \times I_d}$ with a matrix $\mathbf{Q}\in \mathbb{C}^{n \times I_k}$ is a $d$th-order tensor of size $I_1\times \cdots \times I_{k-1}\times n \times I_{k+1} \times \cdots I_d$, 
	\[ \left(\mathcal{X}\times _k \mathbf{Q}\right)_{i_1\cdots i_{k-1}ji_{k+1}\cdots i_d } = \sum_{i_k = 1}^{I_k} \mathcal{X}_{i_1\cdots i_{k-1}i_k i_{k+1}\cdots i_d} q_{ji_k}, \]
	it is actually the contraction on the $k$th mode of $\mathcal{X}$ and the second mode of $\mathbf{Q}$. Note that for simplicity and consistency with the literature, we omit the superscript in the contraction
	operation $\times _k^2$ to denote the mode-$k$ product.
	
	Likewise, the mode-$k$ product of $\mathcal{X}$ with a vector $\mathbf{v} \in \mathbb{C}^{I_k}$ is a $(d-1)$th-order tensor of size $I_1\times \cdots \times I_{k-1}\times I_{k+1}\times \cdots \times I_d$, 
	\[ \left( \mathcal{X} \times_k \mathbf{v} \right)_{i_1\cdots i_{k-1}i_{k+1}\cdots i_d} = \sum_{i_k = 1}^{I_k} \mathcal{X}_{i_1\cdots i_{k-1}i_k i_{k+1}\cdots i_d} v_{i_k}. \]
\end{definition}

Tensor unfolding is a frequently used operation in tensor algebraic analysis and numerical computations \cite{kolda2009tensor, Brazell2013}.  For a given sequence $\left \lfloor I \right \rfloor  = \{ I_1, I_2,\ldots,I_d \}$ of tensor dimensions and an index sequence $\left \lfloor i \right \rfloor  = \{ i_1,i_2,\ldots,  i_d \}$, we denote by $\phi(\cdot  , \cdot) : \mathbb{Z}_+^d \times \mathbb{Z}_+^d \longrightarrow  \mathbb{Z}_+$ the bijection of the reverse lexicographic ordering 
\[ \phi(\left \lfloor i \right \rfloor, \left \lfloor I \right \rfloor )=i_1+\sum_{k=2}^{d} (i_k-1)\prod_{l=1}^{k-1}I_l. \]
In fact, tensor unfolding is the process of merging multiple indices based on the bijection $\phi$.
\begin{definition}[Tensor vectorization]
	The vectorization of a tensor $\mathcal{X}\in \mathbb{C}^{I_1\times I_2\times\cdots \times I_d}$, denoted as $\mathrm{vec}(\mathcal{X})$, is a long column vector $\mathbf{v} = \mathrm{vec}(\mathcal{X})\in \mathbb{C}^{\prod_{k=1}^{d}I_k}$ whose elements are
	$v_l = \mathcal{X}_{i_1 i_2\cdots i_d}$
	with $l = \phi(\{i_1,i_2,\ldots,i_d\} , \{ I_1,I_2,\ldots,I_d \})$.
\end{definition}
\begin{definition}[Tensor mode-$k$ unfolding] \label{Def: modekUnfold}
	For a tensor $\mathcal{X}\in \mathbb{C}^{I_1\times I_2\times \cdots \times I_d}$, its mode-$k$ unfolding matrix $\mathbf{X}_{(k)} \in \mathbb{C}^{I_k \times \prod_{j = 1, j\ne k}^{d} I_j}$ is obtained by arranging the mode-$k$ fibers $\mathcal{X}_{i_1\cdots i_{k-1}:i_{k+1}\cdots i_d}$ as column vectors, for concreteness,
	\[ (\mathbf{X}_{(k)})_{: l} = \mathcal{X}_{i_1\cdots i_{k-1}:i_{k+1}\cdots i_d} \]
	with $l = \phi (\{ i_1,\dots,i_{k-1}, i_{k+1}, \dots, i_d \}, \left\{ I_1, \dots , I_{k-1},I_{k+1}, \right. \\ \left. \dots, I_d \right\})$, that is, the indices of modes
	$\{ 1, \dots, k-1, k+1, \dots, d \}$ are merged to the column index based on the reverse lexicographic ordering $\phi$.
\end{definition}
Fig. \ref{Graph: unfolding} shows the process of unfolding a third-order tensor $\mathcal{X}$ to the mode-$1$ unfolding matrix $\mathbf{X}_{(1)}$.
\begin{figure}[!htbp]
	\includegraphics[width=0.45\textwidth]{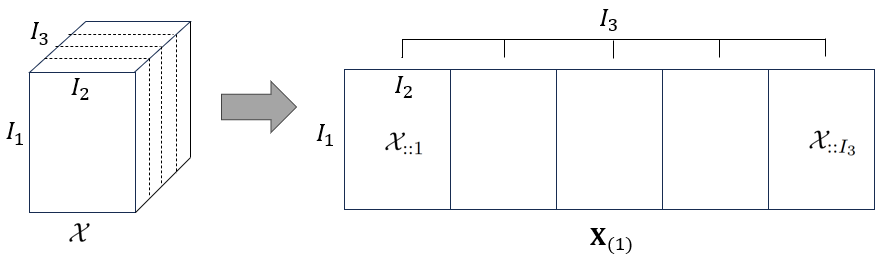}
	\centering
	\caption{The mode-$1$ unfolding process of a third-order tensor.}
	\label{Graph: unfolding}
\end{figure}

\begin{definition}[Outer product and rank-one tensor]
	The outer product of two vectors $\mathbf{a}=(a_1,a_2, \dots, a_m)^\top \in \mathbb{C}^{m}, \mathbf{b}=(b_1,b_2, \dots, b_n)^\top \in \mathbb{C}^{n}$ is a matrix $\mathbf{a}\circ \mathbf{b} \in \mathbb{C}^{m\times n}$ of rank one, and
$ \left(\mathbf{a}\circ \mathbf{b}\right)_{ij} = a_i b_j$. 

	The outer product of $d$ vectors $\mathbf{a}^{(k)} \in \mathbb{C}^{I_k}$ for $k=1,2,\dots,d$ is a $d$th-order rank-one tensor $\mathcal{X} := \mathbf{a}^{(1)} \circ \mathbf{a}^{(2)} \circ \cdots \circ \mathbf{a}^{(d)} \in \mathbb{C}^{I_1 \times I_2 \times \cdots \times I_d}$ with entries
	\[ \mathcal{X}_{i_1\cdots i_d} = \prod_{k=1}^{d} a^{(k)}_{i_k}. \]    	
\end{definition} 

\begin{definition}[Kronecker and Hadamard product]
	If $\mathbf{A} \in \mathbb{C}^{m\times n}$ and $\mathbf{B} \in \mathbb{C}^{p\times q}$, then their Kronecker product $\mathbf{A}\otimes \mathbf{B}$ is the $mp\times nq$ block matrix
	\begin{equation*}
		\mathbf{A} \otimes \mathbf{B}=\left[\begin{array}{cccc}
			a_{11} \mathbf{B} & a_{12} \mathbf{B} & \cdots & a_{1 n} \mathbf{B} \\
			a_{21} \mathbf{B} & a_{22} \mathbf{B} & \cdots & a_{2 n} \mathbf{B} \\
			\vdots &\vdots  &\ddots & \vdots \\
			a_{m 1} \mathbf{B} & 	a_{m 2} \mathbf{B} & \cdots & a_{m n} \mathbf{B}
		\end{array}\right].
	\end{equation*}
	For two matrices $\mathbf{A}$ and $\mathbf{B}$ of the same dimension $m \times n$, their Hadamard product $\mathbf{C}=\mathbf{A} \odot \mathbf{B}$ is a matrix of the same dimension with elements
	$ c_{i j}=a_{i j} b_{i j}$.
	 
	Vectors, as a special kind of matrices, are also the objects of these products.
\end{definition}

The following properties about the tensor mode-$k$ product and mode-$k$ unfolding are easily verified.
\begin{proposition}[\cite{wei2016theory}]$  $
	Let $\mathcal{X}\in \mathbb{C}^{I_1\times I_2\times\cdots \times I_d}$, $\mathbf{Q}^{(1)} \in \mathbb{C}^{n \times I_k}$, $\mathbf{Q}^{(2)} \in \mathbb{C}^{m\times n}$, and $\mathbf{v}^{(k)} \in \mathbb{C}^{I_k}$ for $k = 2,3,\ldots, d$. Then
	\begin{enumerate} \rm
		\item $\mathcal{X}\times_k \mathbf{Q}^{(1)} \times _k \mathbf{Q}^{(2)} = \mathcal{X} \times _k \left( \mathbf{Q}^{(2)}\mathbf{Q}^{(1)} \right)$.
		\item $\mathcal{X} \times _2 \mathbf{v}^{(2)} \times_3 \cdots \times_d \mathbf{v}^{(d)} = \mathbf{X}_{(1)} \left( \mathbf{v}^{(d)}\otimes \cdots \otimes \mathbf{v}^{(2)} \right)$.
	\end{enumerate}
\end{proposition}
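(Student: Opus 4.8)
The plan is to prove both identities by direct entrywise computation, since each reduces to a rearrangement of finite sums together with the index bookkeeping fixed by the reverse lexicographic map $\phi$.

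For part (1), I would start from the defining formula of the mode-$k$ product and apply it twice. Fixing all indices outside mode $k$, writing $p$ for the output index of $\mathbf{Q}^{(2)}$ and $j$ for the intermediate index produced by $\mathbf{Q}^{(1)}$, the left-hand side entry is
\[ \sum_{j=1}^{n}\left(\sum_{i_k=1}^{I_k}\mathcal{X}_{i_1\cdots i_k\cdots i_d}\,q^{(1)}_{j i_k}\right)q^{(2)}_{p j}. \]
Because these are finite sums I can interchange the order of summation and isolate the inner sum over $j$, which is exactly $\sum_{j} q^{(2)}_{pj} q^{(1)}_{j i_k} = (\mathbf{Q}^{(2)}\mathbf{Q}^{(1)})_{p i_k}$. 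Substituting this back yields $\sum_{i_k}\mathcal{X}_{i_1\cdots i_k\cdots i_d}(\mathbf{Q}^{(2)}\mathbf{Q}^{(1)})_{p i_k}$, the entry of the right-hand side by definition. I expect no genuine obstacle here; the only care needed is to keep the free indices fixed throughout and to match the shape $m\times I_k$ of $\mathbf{Q}^{(2)}\mathbf{Q}^{(1)}$.

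For part (2), I would again compute both sides entrywise at a fixed output index $i_1$. Applying the vector mode-$k$ products successively collapses modes $2,\dots,d$, giving the left-hand side entry
\[ \sum_{i_2=1}^{I_2}\cdots\sum_{i_d=1}^{I_d}\mathcal{X}_{i_1 i_2\cdots i_d}\,v^{(2)}_{i_2}\cdots v^{(d)}_{i_d}. \]
On the right, the definition of the mode-$1$ unfolding gives $(\mathbf{X}_{(1)})_{i_1 l}=\mathcal{X}_{i_1 i_2\cdots i_d}$ with $l=\phi(\{i_2,\dots,i_d\},\{I_2,\dots,I_d\})$, so the matrix-vector product equals $\sum_l \mathcal{X}_{i_1 i_2\cdots i_d}\,w_l$ where $\mathbf{w}=\mathbf{v}^{(d)}\otimes\cdots\otimes\mathbf{v}^{(2)}$. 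The entire identity therefore reduces to the single claim that the $l$-th entry of this Kronecker product equals $v^{(2)}_{i_2}\cdots v^{(d)}_{i_d}$ whenever $l=\phi(\{i_2,\dots,i_d\},\{I_2,\dots,I_d\})$.

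This last claim is the crux, and I expect it to be the main obstacle: not because it is deep, but because it is where the ordering conventions must be reconciled. The reverse lexicographic formula $\phi(\lfloor i\rfloor,\lfloor I\rfloor)=i_1+\sum_{k}(i_k-1)\prod_{l<k}I_l$ makes the \emph{first} index run fastest, whereas the standard Kronecker product $\mathbf{a}\otimes\mathbf{b}$ places the \emph{left} factor in the slow position; this mismatch is precisely what forces the reversed factor order $\mathbf{v}^{(d)}\otimes\cdots\otimes\mathbf{v}^{(2)}$. I would settle it by induction on the number $d-1$ of contracted modes. The base case $d=2$ is trivial since $\phi(\{i_2\},\{I_2\})=i_2$ and $\mathbf{w}=\mathbf{v}^{(2)}$. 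For the inductive step I would peel off the slowest-varying factor $\mathbf{v}^{(d)}$, whose block structure writes $\mathbf{w}$ as $I_d$ blocks of length $\prod_{l=2}^{d-1}I_l$, the $i_d$-th scaled by $v^{(d)}_{i_d}$, and I would match positions using the recursion $\phi(\{i_2,\dots,i_d\},\{I_2,\dots,I_d\})=\phi(\{i_2,\dots,i_{d-1}\},\{I_2,\dots,I_{d-1}\})+(i_d-1)\prod_{l=2}^{d-1}I_l$. Once the Kronecker indexing is aligned with $\phi$ block by block, both sides agree term by term and the proof closes.
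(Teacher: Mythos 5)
Your proposal is correct: part (1) is the standard interchange of finite sums, and for part (2) you rightly identify the only subtle point --- that the reverse lexicographic ordering $\phi$ (first index fastest) forces the reversed Kronecker order $\mathbf{v}^{(d)}\otimes\cdots\otimes\mathbf{v}^{(2)}$ --- and your induction peeling off the slowest factor via $\phi(\{i_2,\dots,i_d\},\{I_2,\dots,I_d\})=\phi(\{i_2,\dots,i_{d-1}\},\{I_2,\dots,I_{d-1}\})+(i_d-1)\prod_{l=2}^{d-1}I_l$ closes it cleanly. The paper itself gives no proof, stating the properties are ``easily verified'' and citing \cite{wei2016theory}, and your direct entrywise verification is precisely the routine argument intended there.
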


The Frobenius norm of a tensor $\mathcal{X} \in \mathbb{C}^{I_1 \times I_2\times \cdots \times I_d}$ is
\[ 
\|\mathcal{X}\|_F=\sqrt{\sum_{i_1=1}^{I_1} \cdots \sum_{i_d=1}^{I_d}\left|\mathcal{X}_{i_1 i_2 \cdots i_d}\right|^2}, \]
which is a generalization of the Frobenius norm of matrices and vectors.

\subsection{Tensor networks}
Directly storing a higher-order tensor $\mathcal{X}^{I_1 \times I_2 \times\cdots \times I_d}$ in its original form is highly storage-intensive, since the number of elements $ {\textstyle \prod_{k=1}^{d}}I_k $ grows exponentially with the order $d$. This is known as the curse of dimensionality. Luckily, several elegant tensor networks provide low-parametric representations for tensors. 

\textbf{CP decomposition}. The canonical form of a tensor  $\mathcal{S}\in \mathbb{C}^{I_1\times I_2\times \cdots \times I_d}$ is the sum of rank-one tensors
\begin{equation*}\label{CP}
	\mathcal{S} = \sum_{r=1}^{R} \mathbf{U}_{:r}^{(1)} \circ \mathbf{U}_{:r}^{(2)} \circ\cdots \circ \mathbf{U}_{:r}^{(d)}, 
\end{equation*}
in componentwise,
\begin{equation*}
	\mathcal{S}_{i_1\cdots i_d} = \sum_{r=1}^{R} u^{(1)}_{i_1r} u^{(2)}_{i_2r}\cdots u^{(d)}_{i_kr},
\end{equation*}
where $ \mathbf{U}^{(k)} = \begin{bmatrix}
	\mathbf{U}_{:1}^{(k)}  & \mathbf{U}_{:2}^{(k)}  & \ldots & \mathbf{U}_{:R}^{(k)}
\end{bmatrix} \in \mathbb{C}^{I_k \times R}$ for $k = 1,2,\ldots,d$ are called factor matrices of $\mathcal{S}$. CP decomposition \cite{hitchcock1927expression, kolda2009tensor} devotes to find the minimum $R$ such that the above equality holds, and the number of terms $R$ is called CP-rank. Fig. \ref{Graph: CP} is the graphical illustration for
CP decomposition. In CP decomposition, we only need to record the factor matrices with $R {\textstyle \sum_{k=1}^{d}}I_k $ elements to represent a large-scale tensor, which becomes linearly dependent on the dimensions. If the considered tensor owns a potential low CP-rank structure, then it can greatly reduce the number of parameters.
\begin{figure}[!htbp]
	\centering
	\includegraphics[width=0.45\textwidth]{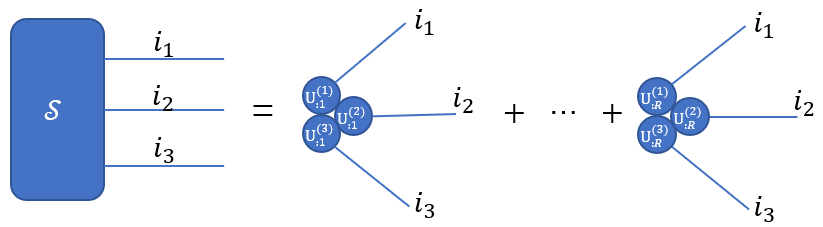}
	\caption{Graphical illustration of the CP format for a third-order tensor.}
	\label{Graph: CP}
\end{figure}

\textbf{The tensor train decomposition}. The tensor-train (TT) format \cite{oseledets2011tensor, oseledets2010tt} is another low-parametric representation for tensors, and has realized broad applications in machine learning and scientific computing, etc. In the TT decomposition, a $d$th-order tensor $\mathcal{S} \in \mathbb{C}^{I_1\times I_2\times \cdots \times I_k}$ is represented as the contractions of $d$ third-order tensors, and each entry is determined by
\begin{equation*}\label{TT}
	\mathcal{S}_{i_1\cdots i_d} = \sum_{r_0=1}^{R_0}\sum_{r_1=1}^{R_1}  \cdots \sum_{r_d=1}^{R_d}  \mathcal{G}^{(1)}_{r_0i_1r_1}\mathcal{G}^{(2)}_{r_1 i_2 r_2} \cdots \mathcal{G}^{(d)}_{r_{d-1} i_dr_d}, 
\end{equation*}
equivalently,
\begin{equation*}\label{TT-component}
	\mathcal{S} = \sum_{r_0=1}^{R_0}\sum_{r_1=1}^{R_1}  \cdots \sum_{r_d=1}^{R_d} \mathcal{G}^{(1)}_{r_0:r_1}\circ \mathcal{G}^{(2)}_{r_1 : r_2}\circ \cdots \circ \mathcal{G}^{(d)}_{r_{d-1}:r_d}, 
\end{equation*}
where $R_0=R_d=1$, these third-order tensors $\mathcal{G}^{(k)} \in \mathbb{C}^{R_{k-1}\times I_k \times R_k}$ for $k=1,2,\ldots,d$ are called TT-cores, and $(R_1, R_2,\ldots, R_{d-1})$ are TT-ranks. Fig. \ref{Graph: TT} is a graphical representation of the TT decomposition of a $d$th-order tensor, which looks like a train as its name means. In virtue of the TT decomposition, the tensor $\mathcal{S} \in \mathbb{C}^{I_1\times I_2\times \cdots \times I_k}$ can be compressed into $ {\textstyle \sum_{k=1}^{d}} R_{k-1}I_kR_k $ parameters. Lower TT-ranks imply lower memory consumption and computational cost. 
\begin{figure}[htbp!]
	\centering
	\includegraphics[width=0.49\textwidth]{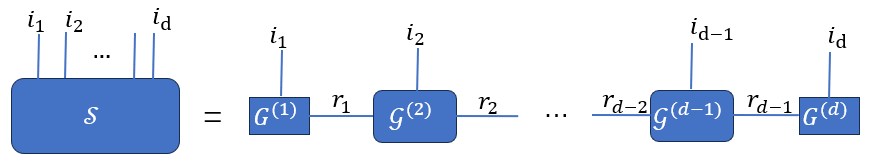}
	\caption{Graphical illustration of the TT decomposition for a $d$th-order tensor.}
	\label{Graph: TT}
\end{figure}

\textbf{Tucker decomposition}. Tucker decomposition \cite{tucker1966some} aims to approximate a tensor $\mathcal{X}\in \mathbb{C}^{I_1 \times I_2\times \cdots \times I_d}$ as the mode products of a smaller tensor and factor matrices, that is,
\begin{equation*}\label{Tucker}
	\mathcal{X} = \mathcal{G}\times _1 \mathbf{Q}^{(1)} \times _2 \mathbf{Q}^{(2)}\times_3 \cdots \times_d \mathbf{Q}^{(d)}, 
\end{equation*}
where $\mathcal{G} \in \mathbb{C}^{R_1\times R_2 \times\cdots \times R_d}$ is the Tucker core tensor, $\mathbf{Q}^{(k)} \in \mathbb{C}^{R_k\times I_k}$ with $R_k \le I_k$ for $k=1,2,\ldots,d$ are full-rank factor matrices. Its graphical illustrations are shown in Fig.~\ref{Graph: Tucker}. 
\begin{figure}[htbp!]
	\includegraphics[width=0.39\textwidth]{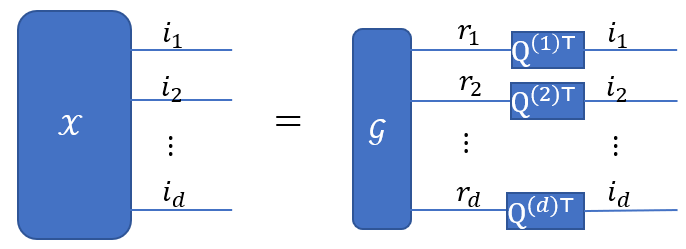} 
	\centering
	\caption{Graphical illustration of Tucker decomposition for a $d$th-order tensor.}
	\label{Graph: Tucker}
\end{figure}

Tucker decomposition is useful for dimensionality reduction and feature extraction of high-dimensional data. Robust algorithms, including truncated HOSVD (THOSVD) \cite{DeLathauwer20001253} and sequentially truncated HOSVD (STHOSVD) \cite{Vannieuwenhoven2012A1027}, have been developed to approximate the low-rank Tucker decomposition. In recent years, randomized algorithms \cite{che2019randomized, che2023efficient, minster2020randomized} have gained much attention due to their effectiveness and flexibility; for example, a randomized version of STHOSVD proposed in \cite{che2023efficient} is shown in Algorithm~\ref{alg: R-STHOSVD}.

\begin{algorithm}[!htbp]
	\renewcommand{\algorithmicrequire}{\textbf{Input:}}
	\renewcommand{\algorithmicensure}{\textbf{Output:}}
	\caption{Randomized STHOSVD algorithm \cite{che2023efficient}}
	\label{alg: R-STHOSVD}
	\centering
	\begin{algorithmic}[1]
		\REQUIRE A tensor $\mathcal{X} \in \mathbb{R}^{I_1 \times I_2 \times \cdots \times I_d}$, target multilinear ranks $\left\{R_1, R_2, \ldots, R_d\right\}$, oversampling
		parameter $K$ and integer $q \geq 1$.
		\ENSURE $N$ orthogonal matrix $\mathbf{Q}^{(k)}$ and core tensor $\mathcal{G}$ such that $\mathcal{X} \approx \mathcal{G} \times_1 \mathbf{Q}^{(1)} \times_2 \mathbf{Q}^{(2)} \times_3 \cdots \times_d \mathbf{Q}^{(d)}$.
		
		\STATE For $k=1,2, \ldots, d$ do:
		\STATE Generate a Gaussian random matrix $\mathbf{G}_k \in \mathbb{R}^{I_k \times\left(R_k+K\right)}$.
		\STATE Form the mode-$k$ unfolding matrix $\mathbf{X}_{(k)}$ of $\mathcal{X}$.
		\STATE Calculate $\mathbf{C}_k=\left(\mathbf{X}_{(k)} \mathbf{X}_{(k)}^{\top}\right)^q \mathbf{G}_k$
		\STATE QR decomposition of $\mathbf{C}_k$ to get its orthogonal columns $\mathbf{Q}$, and set $\mathbf{Q}^{(k)}=\mathbf{Q}_{:, 1: R_k}$.
		\STATE Calculate $\mathcal{X}=\mathcal{X} \times{ }_k \mathbf{Q}^{(k) \top}$.
		\STATE End for.
		\STATE Make the kernel tensor $\mathcal{G}=\mathcal{X}$.
	\end{algorithmic}  
\end{algorithm} 

\subsection{GMP model identification}
The output complex signal of the GMP model can be expressed as
\begin{equation}\label{GMP}	
	y(t) = \sum_{i=0}^{M_1-1}\sum_{j=0}^{M_2-1}\sum_{p=0}^{P-1} \mathcal{S}_{ijp}\ x(t-i)\left| x(t-j) \right|^p,
\end{equation}
where $x(t)$ is the input complex signal, and $| \cdot |$ denotes the modulus of complex numbers. Next, we introduce two different approaches to identify the model parameters.

Suppose there are $N$ sampling data $\{ x(t), y(t) \}_{t_0}^{t_0+N-1}$ to be used for the identification of the GMP model. Let $\mathbf{y}=\left( y(t_0),y(t_ 0+1),\ldots,y(t_0+N-1) \right)^{\top}\in \mathbb{C}^{N}$, and $\mathcal{X}\in \mathbb{C}^{N\times M_1\times M_2\times P}$ denote the fourth-order tensor formed by the basis functions
\[ \mathcal{X}_{nijp} = x((t_0+n)-i)\left| x((t_0+n)-j) \right|^p. \]
We notice that the output $\mathbf{y}$ of the GMP model \eqref{GMP} can be rewritten as the contraction of $\mathcal{X}$ and the coefficient tensor $\mathcal{S}$ as follows
\[ \mathbf{y}=\mathcal{X}\times _{2,3,4}^{1,2,3} \mathcal{S}, \]
as shown in Fig. \ref{Graph: GMP}.
\begin{figure}[!htbp]
	\centering
	\includegraphics[width=0.36\textwidth]{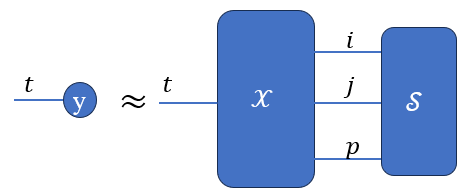}
	\caption{The tensor structure of GMP model}
	\label{Graph: GMP}
\end{figure}

According to the properties of tensor unfolding, the above equation is equivalent to
\[ \mathbf{y} = \mathbf{X}_{(1)} \mathrm{vec}(\mathcal{S}). \]
Hence, the unknown coefficients $\mathcal{S}_{ijp}$'s can be estimated from the ridge regression problem \cite{2014Behavioral, Morgan20063852}
\begin{equation} \label{GMP-identify}
	\min_{\mathbf{s} \in \mathbb{C}^{M_1M_2P}} \left \| \mathbf{y} - \mathbf{X}_{(1)}\mathbf{s} \right \|_F^2 + \gamma \left \| \mathbf{s} \right \|_F^2,  
\end{equation}
where the regularized term $\gamma\| \mathbf{s} \|_F^2$ improves the numerical stability of the problem and can alleviate the overfitting issue by adjusting the penalty parameter $\gamma\ge 0$. Its optimal solution can be directly obtained by the least-squares (LS) method from the following normal equation
\[ (\mathbf{X}_{(1)}^{*}\mathbf{X}_{(1)}+\gamma \mathbf{I})\mathbf{s} = \mathbf{X}_{(1)}^{*}\mathbf{y}, \]
then the coefficient tensor $\mathcal{S} = \mathrm{vec}^{-1}(\mathbf{s})$. 

Another approach is to identify a sparse GMP model with lower complexity from the least absolute shrinkage and selection operator (LASSO) regression problem \cite{wang2023pruning}
\begin{align} \label{GMP-LASSO}
	\min_{\mathbf{s}\in \mathbb{C}^{M_1M_2M_3}}\left \| \mathbf{y} - \mathbf{X}_{(1)}\mathbf{s} \right \|_F^2 + \gamma \left \| \mathbf{s} \right \|_1, 
\end{align}
where the objective function becomes non-differentiable. Hence, the solution can only be solved by iterative methods, such as the proximal gradient descent (PGD) algorithm used in \cite{wang2023pruning}. The fast iterative shrinkage-thresholding algorithm
(FISTA) preserves the computational simplicity of the PGD algorithm but with a better convergence rate \cite{beck2009fast}. The FISTA for \eqref{GMP-LASSO} is given in Algorithm~\ref{alg:FISTA}.

Once the GMP model is identified based on the training data, we compute the normalized mean square error (NMSE) \cite[Chapter 3]{2014Behavioral} 
\begin{equation} \label{NMSE}
	\mathrm{N M S E}\text{(dB)}=10 \log _{10}\left(\frac{\left\|\mathbf{y}_{\text {model }}-\mathbf{y}_{\text {test}}\right\|_F^2}{\left\|\mathbf{y}_{\text {test}}\right\|_F^2}\right)
\end{equation}
on the test dataset to measure the modeling performance, where $\mathbf{y}_{\text {model }}$ denotes the simulated outputs from the trained model, $\mathbf{y}_{\text {test}}$ is the real output signals. 

\begin{algorithm}[!htbp]
	\renewcommand{\algorithmicrequire}{\textbf{Input:}}
	\renewcommand{\algorithmicensure}{\textbf{Output:}}
	\caption{FISTA for GMP model identification via LASSO regression}
	\label{alg:FISTA}
	\centering
	\begin{algorithmic}[1]
		\REQUIRE Training dataset $\{ x(t),y(t) \}_{t_0}^{t_0+N-1}$, memory depths $M_1$ and $M_2$, nonlinearity order $P$, penalty parameter $\gamma \ge 0$, number of iterations $L$, and initial values $\mathbf{s}_0=\mathbf{s}_{-1}$.
		\ENSURE The sparse coefficient tensor $\mathcal{S}$ of GMP model.
		
		\STATE Compute the fourth-order data tensor $\mathcal{X}_{nijp} = x((t_0+n)-i)\left| x((t_0+n)-j) \right|^p$, and its mode-1 unfolding matrix $\mathbf{X}_{(1)} = \mathrm{reshape}(\mathcal{X}, N, M_1M_2P)$.
		\STATE Compute the step size $\alpha = \frac{1}{\| \mathbf{X}_{(1)} \|_2^2}$.
		\STATE For $k=1$ to $L$:
		\STATE Compute $\mathbf{z}_k = \mathbf{s}_{k-1}+\frac{k-2}{k+1}(\mathbf{s}_{k-1}-\mathbf{s}_{k-2})$,
		\STATE Compute $\mathbf{w}_k=\mathbf{z}_k-\alpha \mathbf{X}_{(1)}^*(\mathbf{X}_{(1)}\mathbf{z}_k-\mathbf{y})$,
		\STATE Compute all components of $\mathbf{s}_k$ by
		\[ (\mathbf{s}_k)_i=\left\{\begin{matrix}
			(\mathbf{z}_k)_i(1-\frac{\alpha \gamma}{|(\mathbf{z}_k)_i|}),  & |(\mathbf{z}_k)_i|>\alpha \gamma, \\
			0,\quad  \quad \quad  \quad \quad  \quad & |(\mathbf{z}_k)_i|\le \alpha \gamma,
		\end{matrix}\right. \]
		\STATE End for.
		\STATE Let $\mathcal{S} = \mathrm{reshape}(\mathbf{s}_L, [M_1, M_2, P])$.
	\end{algorithmic}  
\end{algorithm} 

\section{Proposed tensor-based GMP models}
\label{sec. 3}
In this section, we use tensor networks to compress the coefficient tensor $\mathcal{S}$ and propose three tensor-based GMP models, so the potential low-rank tensor structure of $\mathcal{S}$ can be utilized to reduce the number of parameters and model complexity. The ALS methods are derived to identify these models. The model complexity analysis is also presented.

\subsection{GMP-CP model}

In the GMP model \eqref{GMP}, we represent its coefficient tensor $\mathcal{S}$ as the CP format
\[ \mathcal{S}_{i j p}=\sum_{r=1}^R a_{i r} b_{j r} c_{p r}, \]
where $\mathbf{A} \in \mathbb{C}^{M_1 \times R}, \mathbf{B} \in \mathbb{C}^{M_2 \times R}, \mathbf{C} \in \mathbb{C}^{P \times R}$. Then we propose the GMP-CP model
\begin{equation} \label{GMP-CP-original}
	y(t)=\sum_{r=1}^R \sum_{i=0}^{M_1-1} \sum_{j=0}^{M_2-1} \sum_{p=0}^{P-1} a_{i r} b_{j r} c_{p r} x(t-i)|x(t-j)|^p.
\end{equation}

The GMP-CP model has at least three advantages:
\begin{itemize}
\item \textbf{Fewer parameters to characterize the PA systems}. An immediate benefit is the reduction in the number of parameters, as tensor networks provide more compact formats for tensors by making full use of their implicit structure. The number of parameters in GMP-CP model is only $R\left(M_1+M_2+P\right)$, linearly dependent on the memory depths and nonlinearity order, and is usually much smaller than $M_1 M_2 P$ of the GMP model. In our experiments, a small value of $R$ can achieve comparable performance to the GMP model.

\item \textbf{Lower model complexity}. In the mathematical expression \eqref{GMP-CP-original}, we observe that the summation of index $i$ can be carried out separately from the summation of $j,p$ to save the computations, that is,
\begin{equation} \label{GMP-CP}
	y(t) = \sum_{r=1}^{R}\sum_{i=0}^{M_1-1}a_{ir}x(t-i)\sum_{j=0}^{M_2-1}\sum_{p=0}^{P-1} b_{jr} c_{pr} \left| x(t-j)  \right|^p.
\end{equation}
The concrete model complexity analysis can be seen in Subsection \ref{Sec: complexity}. In experiments, the time to simulate the GMP-CP output signals is far less than the GMP model.

\item \textbf{Lower requirements on training data volume and memory storage}. Though the identification for GMP-CP model \eqref{GMP-CP} becomes a multilinear least-squares problem, we derive the ALS algorithm (see Algorithm~\ref{alg: GMP-CP-identify}) to identify the model parameters, where only three small-scale least-squares subproblems need to be solved in each iteration. That requires fewer training data and lower memory storage for the data tensors.
\end{itemize}
	
Next, we derive the ALS algorithm for the identification of GMP-CP model. As shown in Table~\ref{table: Nomenclature}, for $N$ pieces of measured data $\{x(t), y(t)\}^{t_0+N-1}_{t_0}$, let $\mathbf{H} \in \mathbb{C}^{N \times M_1}$ be a rectangular matrix with elements $h_{n i}=x\left(\left(t_0+n\right)-i\right)$, and $\mathcal{M}\in \mathbb{R}^{N\times M_2 \times P}$ be a real third-order tensor computed from input signals $\mathcal{M}_{n j p}=\left|x\left(\left(t_0+n\right)-j\right)\right|^p$. Writing out (\ref{GMP-CP}) for $t=t_0, t_0+1, \ldots, t_0+N-1$ leads to the following tensor equation
\begin{equation}\label{GMP-CP-vectorform}
	\mathbf{y}=\sum_{r=1}^R \mathbf{H A}_{: r} \odot\left(\mathcal{M} \times_2 \mathbf{B}_{: r} \times_3 \mathbf{C}_{: r}\right),
\end{equation}
which is the vector form of GMP-CP model as illustrated in Fig. \ref{Graph: GMP-CP}.
\begin{figure}[htbp!]
	\centering
	\includegraphics[width=0.48\textwidth]{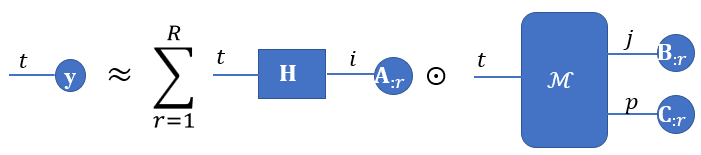}
	\caption{GMP-CP model structure}
	\label{Graph: GMP-CP}
\end{figure}

We identify the GMP-CP model by solving the following regularized multilinear least-squares problem
\begin{align} \label{GMP-CP-identify}
	\min\limits_{\tiny \substack{\mathbf{A} \in \mathbb{C}^{M_1 \times R} \\ \mathbf{B} \in \mathbb{C}^{M_2 \times R} \\ \mathbf{C} \in \mathbb{C}^{P \times R}}}\left\|\mathbf{y}-\sum_{r=1}^R \mathbf{H} \mathbf{A}_{: r} \odot\left(\mathcal{M} \times_2 \mathbf{B}_{: r} \times_3 \mathbf{C}_{: r}\right)\right\|_F^2 \nonumber \\
	+\gamma\left(\|\mathbf{A}\|_F^2+\|\mathbf{B}\|_F^2+\|\mathbf{C}\|_F^2\right),
\end{align}
here `multiliner' means that the equation \eqref{GMP-CP-vectorform} is linear with each of the three parameter matrices $\mathbf{A}, \mathbf{B}, \mathbf{C}$ by fixing the other two. Then the ALS algorithm \cite{Holtz2012701, Rohwedder20131134, Xu20131758} can be adopted to solve the optimization problem \eqref{GMP-CP-identify}. For fixed parameter matrices $\mathbf{B}$ and $\mathbf{C}$, we firstly compute the third-order tensor $\mathcal{D} \in \mathbb{C}^{N\times M_2\times P} $ that
\[ \mathcal{D}_{nir} = h_{ni} (\mathcal{M}\times_2 \mathbf{B}_{:r}\times_3 \mathbf{C}_{:r})_n, \]
then the identification problem \eqref{GMP-CP-identify} boils down to a regularized linear least-squares problem about $\mathbf{A}$,
\[ \min_{\mathbf{A}\in \mathbb{C}^{M_1\times R}} \left \| \mathbf{y} - \mathcal{D}\times _{2,3}^{1,2} \mathbf{A} \right \|^2_F + \gamma \left \| \mathbf{A} \right \|_F^2.  \]
Let $\mathbf{a} = \mathrm{vec}(\mathbf{A})$, the long vector stacked by the columns of $\mathbf{A}$. Then an equivalent problem is yielded as follows
\[ \min_{\mathbf{a}\in \mathbb{C}^{M_1R}} \left \| \mathbf{y} - \mathbf{D}_{(1)} \mathbf{a} \right \|^2_F + \gamma \left \| \mathbf{a} \right \|_F^2,  \]
which is formally similar to the GMP model identification problem \eqref{GMP-identify}, but the number of optimal parameters is greatly reduced. We obtain the solution $\mathbf{a}$ by solving a normal equation and get the updated parameters $\mathbf{A}=\mathrm{vec}^{-1}(\mathbf{a})$; the updating processes of $\mathbf{B}$ and $\mathbf{C}$ are analogous, which can be seen in Algorithm~\ref{alg: GMP-CP-identify}. In each iteration, only three small-scale least-squares problems need to be solved, which greatly reduces the amount of required training data, and realizes rapid updates of model parameters.

\begin{algorithm}[htbp!] \label{ALS-CP model}
	\renewcommand{\algorithmicrequire}{\textbf{Input:}}
	\renewcommand{\algorithmicensure}{\textbf{Output:}}
	\caption{ALS algorithm for the identification of GMP-CP model}
	\label{alg: GMP-CP-identify}
	\centering
	\begin{algorithmic}[1]
		\REQUIRE Training dataset $\{ x(t),y(t) \}_{t_0}^{t_0+N-1}$, memory depths $M_1, M_2$, nonlinearity order $P$, CP-rank $R$, penalty parameter $\gamma\ge 0$, and number of iterations $L$.
		\ENSURE GMP-CP model parameters $\mathbf{A} \in \mathbb{C}^{M_1\times R}$, $\mathbf{B} \in \mathbb{C}^{M_2 \times R}$, $\mathbf{C} \in \mathbb{C}^{P\times R}$.
		
		\STATE Compute $\mathbf{y} = (y(t_0),y(t_0+1),\ldots, y(t_0+N-1))^{\top}$, matrix $\mathbf{H}\in \mathbb{C}^{N\times M_1}$ with entries $h_{ni} = x((t_0+n)-i)$, and third-order tensor $\mathcal{M}\in \mathbb{R}^{N\times M_2 \times P}$ with $\mathcal{M}_{njp} = | x((t_0+n)-j) |^p$. 
		\STATE Set initial values $\mathbf{A}_0\in \mathbb{C}^{M_1\times R}, \mathbf{B}_0\in \mathbb{C}^{M_2\times R}, \mathbf{C}_0\in \mathbb{C}^{P \times R}$.
		\STATE For $k=1$ to $L$: 
		\STATE Compute the third-order tensor $\mathcal{D}_{nir} = h_{ni} (\mathcal{M}\times_2 (\mathbf{B}_{k-1})_{:r}\times_3 (\mathbf{C}_{k-1})_{:r})_n$, and solve the subproblem of $\mathbf{A}$ as follows
		\[\min_{\mathbf{a}\in \mathbb{C}^{M_1R}} \| \mathbf{y}-\mathbf{D}_{(1)}\mathbf{a} \|_F^2 + \gamma \| \mathbf{a} \|_F^2 ,\]
		update $\mathbf{A}_k = \mathrm{reshape}(\mathbf{a}, [M_1,R])$; 
		\STATE Compute the third-order tensor $\mathcal{E}_{njr} = (\mathbf{HA}_k)_{nr} (\mathcal{M}\times_3 (\mathbf{C}_{k-1})_{:r})_{nj}$, and solve the subproblem of $\mathbf{B}$ as follows
		\[\min_{\mathbf{b}\in \mathbb{C}^{M_2R}} \| \mathbf{y}-\mathbf{E}_{(1)}\mathbf{b} \|_F^2 + \gamma \| \mathbf{b} \|_F^2, \]
		update $\mathbf{B}_k = \mathrm{reshape}(\mathbf{b}, [M_2,R])$; 
		\STATE Compute the third-order tensor $\mathcal{F}_{npr} = (\mathbf{HA}_k)_{nr} (\mathcal{M}\times_2 (\mathbf{B}_k)_{:r})_{np}$, and solve the subproblem of $\mathbf{C}$, that is,
		\[ \min_{\mathbf{c}\in \mathbb{C}^{PR}} \| \mathbf{y}-\mathbf{F}_{(1)}\mathbf{c} \|_F^2 + \gamma \| \mathbf{c} \|_F^2, \]
		update $\mathbf{C}_k = \mathrm{reshape}(\mathbf{c}, [P,R])$.
		\STATE End for.
		\STATE Set $\mathbf{A} = \mathbf{A}_L, \mathbf{B} = \mathbf{B}_L, \mathbf{C} = \mathbf{C}_L$.
	\end{algorithmic}  
\end{algorithm}

\subsection{GMP-TT model}

In the same spirit, if we represent the coefficient tensor of the GMP model via the TT format
\[ \mathcal{S}_{i j p}=\sum_{r_1=1}^{R_1} \sum_{r_2=1}^{R_2} a_{i r_1} \mathcal{B}_{r_1 j r_2} c_{r_2 p}, \]
where $\mathbf{A} \in \mathbb{C}^{M_1 \times R_1}, \mathcal{B} \in \mathbb{C}^{R_1 \times M_2 \times R_2}, \mathbf{C} \in \mathbb{C}^{R_2 \times P}$, then we obtain the GMP-TT model
\small
\begin{align}
	y(t) & =\sum_{r_1=1}^{R_1} \sum_{r_2=1}^{R_2} \sum_{i=0}^{M_1-1} \sum_{j=0}^{M_2-1} \sum_{p=0}^{P-1} a_{i r_1} \mathcal{B}_{r_1 j r_2} c_{r_2 p} x(t-i)|x(t-j)|^p \nonumber \\
	& =\sum_{r_1=1}^{R_1} \sum_{i=0}^{M_1-1} a_{i r_1} x(t-i) \sum_{r_2=1}^{R_2} \sum_{j=0}^{M_2-1} \sum_{p=0}^{P-1} \mathcal{B}_{r_1 j r_2} c_{r_2 p}|x(t-j)|^p .
\end{align}
\normalsize
Its parameters are two matrices $\mathbf{A}, \mathbf{C}$ and a third-order tensor $\mathcal{B}$, the total number of parameters $(R_1 M_1+R_2 P+R_1 R_2 M_2)$ only linearly depends on the memory depths and nonlinearity order as well, and depends on the required TT-ranks $(R_1, R_2)$ in practical applications. Lower TT-ranks of the potential coefficient tensor $\mathcal{S}$ imply lower parameter count and model complexity. The vector form of the GMP-TT model is 
\begin{equation}\label{GMP-TT-vector}
	\mathbf{y}=\sum_{r_1=1}^{R_1} \mathbf{H} \mathbf{A}_{: r_1} \odot \sum_{r_2=1}^{R_2}\mathcal{M} \times_2 \mathcal{B}_{r_1: r_2} \times_3 \mathbf{C}_{r_2:},
\end{equation}
the graphic illustration is shown in Fig.~\ref{fig: GMP-TT}.
\begin{algorithm}[!htbp] 
	\renewcommand{\algorithmicrequire}{\textbf{Input:}}
	\renewcommand{\algorithmicensure}{\textbf{Output:}}
	\caption{ALS algorithm for the identification of GMP-TT model}
	\label{alg: GMP-TT-identify}
	\centering
	\begin{algorithmic}[1]
		\REQUIRE Training dataset $\{ x(t),y(t) \}_{t_0}^{t_0+N-1}$, memory depths $M_1, M_2$, nonlinearity order $P$, TT-ranks $(R_1,R_2)$, penalty parameter $\gamma\ge 0$, and number of iterations $L$.
		\ENSURE GMP-TT model parameters $\mathbf{A} \in \mathbb{C}^{M_1\times R_1}, \mathcal{B} \in \mathbb{C}^{R_1\times M_2\times R_2}, \mathbf{C}\in \mathbb{C}^{R_2\times P}$.
		
		\STATE Compute output vector $\mathbf{y} \in \mathbb{C}^{N}$, matrix $\mathbf{H}\in \mathbb{C}^{N\times M_1}$, and third-order tensor $\mathcal{M}\in \mathbb{C}^{N\times M_2 \times P}$. 
		\STATE Set initial values $\mathbf{A}_0\in \mathbb{C}^{M_1\times R_1}$, $\mathcal{B}_0\in \mathbb{C}^{R_1 \times M_2 \times R_2}$, $\mathbf{C}_0\in \mathbb{C}^{R_2 \times P}$.
		\STATE For $k=1$ to $L$: 
		\STATE Calculate the third-order tensor $\mathcal{D}_{nir_1} = h_{ni} \sum_{r_2=1}^{R_2}(\mathcal{M}\times_2 (\mathcal{B}_{k-1})_{r_1 : r_2}\times_3 (\mathbf{C}_{k-1})_{r_2:})_n$, and solve the subproblem of $\mathbf{A}$ as follows
		\[ \min_{\mathbf{a}\in \mathbb{C}^{M_1R_1}} \| \mathbf{y}-\mathbf{D}_{(1)}\mathbf{a} \|_F^2 + \gamma \| \mathbf{a} \|^2_F, \]
		update $\mathbf{A}_k = \mathrm{reshape}(\mathbf{a}, [M_1,R_1])$; 
		\STATE Calculate the fourth-order tensor $\mathcal{E}_{nr_1jr_2} = (\mathbf{HA}_k)_{nr_1} (\mathcal{M} \times_3 \mathbf{C}_{k-1})_{njr_2}$, and solve the subproblem of $\mathcal{B}$ as follows
		\[ \min_{\mathbf{b}\in \mathbb{C}^{R_1M_2R_2}} \| \mathbf{y}-\mathbf{E}_{(1)}\mathbf{b} \|_F^2 + \gamma \| \mathbf{b} \|_F^2, \]
		update $\mathcal{B}_k = \mathrm{reshape}(\mathbf{b}, [R_1, M_2, R_2])$.
		\STATE Calculate the third-order tensor $\mathcal{F}_{nr_2p} = \sum_{r_1=1}^{R_1} (\mathbf{HA}_k)_{nr_1}(\mathcal{M} \times_2 (\mathcal{B}_k)_{r_1:r_2})_{np}$, and solve the subproblem of $\mathbf{C}$, that is,
		\[ \min_{\mathbf{c}\in \mathbb{C}^{R_2P}} \| \mathbf{y}-\mathbf{F}_{(1)}\mathbf{c} \|_F^2 + \gamma \| \mathbf{c} \|_F^2, \]
		update $\mathbf{C}_k = \mathrm{rshape}(\mathbf{c}, [R_2,P])$.
		\STATE End for.
		\STATE Set $\mathbf{A} = \mathbf{A}_L, \mathcal{B}=\mathcal{B}_L, \mathbf{C}=\mathbf{C}_L$. 
	\end{algorithmic}  
\end{algorithm}

The GMP-TT model owns the analogous advantages with the GMP-CP model \eqref{GMP-CP}, and its identification is also a multilinear least-squares problem with regularized terms
\small
\begin{align} \label{GMP-TT-identify}
	\min _{\tiny \substack{\mathbf{A} \in \mathbb{C}^{M_1 \times R_1} \\ \mathcal{B} \in \mathbb{C}^{R_1 \times M_2 \times R_2} \\ \mathbf{C} \in \mathbb{C}^{R_2 \times P}}}\left\|\mathbf{y}-\sum_{r_1=1}^{R_1} \mathbf{H} \mathbf{A}_{: r_1} \odot \sum_{r_2=1}^{R_2}\mathcal{M} \times_2 \mathcal{B}_{r_1: r_2} \times_3 \mathbf{C}_{r_2:}\right\|_F^2 \nonumber \\
	+\gamma\left(\|\mathbf{A}\|_F^2+\|\mathcal{B}\|_F^2+\|\mathbf{C}\|_F^2\right).
\end{align}
\normalsize
We present the ALS method for problem \eqref{GMP-TT-identify} in Algorithm~\ref{alg: GMP-TT-identify}.

\begin{figure}[htbp!]
	\centering
	\includegraphics[width=0.48\textwidth]{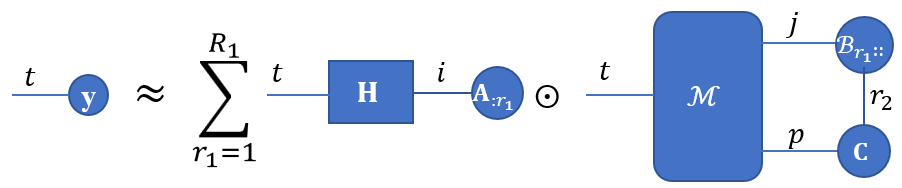}
	\caption{GMP-TT model structure}
	\label{fig: GMP-TT}
\end{figure}

\subsection{GMP-Tucker model}
If we want to exploit the low multilinear-rank structure of the coefficient tensor $\mathcal{S}\in \mathbb{C}^{M_1\times M_2 \times P}$, the Tucker format
\[ \mathcal{S}_{ijp} = \sum_{r_1=1}^{R_1} \sum_{r_2=1}^{R_2}\sum_{r3=1}^{R_3} \mathcal{G}_{r_1r_2r_3} a_{ir_1}b_{jr_2}c_{pr_3} \]
can be used, where $\mathcal{G} \in \mathbf{C}^{R_1 \times R_2 \times R_3}$ is the Tucker core, and $\mathbf{A}\in \mathbb{C}^{M_1\times R_1}, \mathbf{B}\in \mathbb{C}^{M_2 \times R_2}, \mathbf{C}\in \mathbb{C}^{P\times R_3}$ are factor matrices. Then we propose the GMP-Tucker model as follows
\begin{equation}
	\begin{split} \label{GMP-Tucker}
		y(t) = \sum_{r_1,r_2,r_3=1}^{R_1,R_2,R_3}&\sum_{i=0}^{M_1-1} \sum_{j=0}^{M_2-1}\sum_{p=0}^{P-1}  \mathcal{G}_{r_1r_2r_3}a_{ir_1}b_{jr_2}c_{pr_3}
		\\
		&x(t-i)|x(t-j)|^p.
	\end{split}
\end{equation}
The model parameters are three matrices and a core tensor, whose parameter count is  $R_1R_2R_3+M_1R_1+M_2R_2+PR_3$. The vector format of \eqref{GMP-Tucker} can be written as
\begin{align}\label{GMP-Tucker-vector}
    \mathbf{y}=\sum_{r_1,r_2,r_3=1}^{R_1,R_2,R_3}\mathcal{G}_{r_1r_2r_3}\mathbf{H}\mathbf{A}_{:r_1}\odot \mathcal{M}\times _2\mathbf{B}_{:r_2}\times _3 \mathbf{C}_{:r_3}.
\end{align}
Fig.~\ref{fig: GMP-tuckermodel} presents the structure of the GMP-Tucker model.
\begin{figure}[htbp!]
	\centering
	\includegraphics[width=0.39\textwidth]{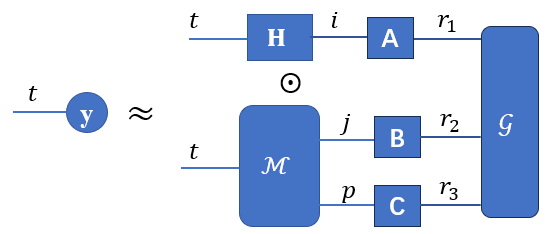}
	\caption{GMP-Tucker model structure}
    \label{fig: GMP-tuckermodel}
\end{figure}

To identify these unknown parameters, we solve a regularized multilinear least-squares problem
\begin{align} 
	\min_{\tiny \substack{\mathcal{G},  \mathbf{A}, \\ \mathbf{B}, \mathbf{C}}} \left\|\mathbf{y}-\sum_{r_1,r_2,r_3=1}^{R_1,R_2,R_3}\mathcal{G}_{r_1r_2r_3} \mathbf{H}\mathbf{A}_{:r_1}\odot \mathcal{M}\times _2\mathbf{B}_{:r_2}\times _3 \mathbf{C}_{:r_3}\right\|_F^2 \nonumber \\
	+\gamma\left(\| \mathcal{G} \|_F^2+ \|\mathbf{A}\|_F^2+\|\mathbf{B}\|_F^2+\|\mathbf{C}\|_F^2\right)
\end{align}
by ALS method as shown in Algorithm~\ref{alg: GMP-Tucker-identify}. In each iteration, we firstly update the core tensor and then three factor matrices sequentially.

\begin{algorithm}[htbp!]
	\renewcommand{\algorithmicrequire}{\textbf{Input:}}
	\renewcommand{\algorithmicensure}{\textbf{Output:}}
	\caption{ALS algorithm for the identification of GMP-Tucker model}
	\label{alg: GMP-Tucker-identify}
	\centering
	\begin{algorithmic}[1]
		\REQUIRE Training dataset $\{ x(t),y(t) \}_{t_0}^{t_0+N-1}$, memory depths $M_1, M_2$ and nonlinearity order $P$, multilinear ranks $(R_1,R_2,R_3)$, number of iterations $L$, penalty parameter $\gamma\ge 0$.
		\ENSURE GMP-Tucker model parameters $\mathcal{G}\in \mathbb{C}^{R_1\times R_2 \times R_3}$, $\mathbf{A} \in \mathbb{C}^{M_1\times R_1}$, $\mathbf{B} \in \mathbb{C}^{M_2 \times R_2}$, $\mathbf{C} \in \mathbb{C}^{P\times R_3}$.
		
		\STATE Compute the output vector $\mathbf{y}\in \mathbb{C}^{N}$, matrix $\mathbf{H}\in \mathbb{C}^{N\times M_1}$, and third-order tensor $\mathcal{M}\in \mathbb{C}^{N\times M_2 \times P}$. 
		\STATE Set initial values $\mathcal{G}_0\in \mathbb{C}^{R_1\times R_2 \times R_3}$, $\mathbf{A}_0\in \mathbb{C}^{M_1\times R_1}, \mathbf{B}_0\in \mathbb{C}^{M_2\times R_2}, \mathbf{C}_0\in \mathbb{C}^{P \times R_3}$.
		\STATE For $k=1$ to $L$: 
		\STATE Compute the fourth-order tensor $\mathcal{K}_{nr_1r_2r_3} = (\mathbf{H}\mathbf{A}_{k-1})_{nr_1} (\mathcal{M}\times_2 \mathbf{B}_{k-1}^{\top}\times_3 \mathbf{C}_{k-1}^{\top})_{nr_2r_3}$, and solve $\mathbf{g}\in \mathbb{C}^{R_1 R_2  R_3}$ from the least-squares problem
			$\min_{\mathbf{g}\in \mathbb{C}^{R_1R_2R_3}} \| \mathbf{y}-\mathbf{K}_{(1)}\mathbf{g} \|_F^2 + \gamma \| \mathbf{g} \|_F^2 ,$
		then update $\mathcal{G}_k = \mathrm{reshape}(\mathbf{g}, [R_1,R_2,R_3])$; 
		\STATE Compute the third-order tensor $\mathcal{D}_{nir_1} = \sum_{r_2,r_3} (\mathcal{G}_{k})_{r_1r_2r_3}h_{ni} (\mathcal{M}\times_2 \mathbf{B}_{k-1}^{\top}\times_3 \mathbf{C}_{k-1}^{\top})_{nr_2r_3}$, and solve $\mathbf{a}\in \mathbb{C}^{M_1 R_1}$ from the least-squares problem
		$\min_{\mathbf{a}\in \mathbb{C}^{M_1R_1}} \| \mathbf{y}-\mathbf{D}_{(1)}\mathbf{a} \|_F^2 + \gamma \| \mathbf{a} \|_F^2, $
	then update $\mathbf{A}_k =  \mathrm{reshape}(\mathbf{a}, [M_1,R_1])$; 
		\STATE Compute the third-order tensor $\mathcal{E}_{njr_2} = \sum_{r_1,r_3}(\mathcal{G}_k)_{r_1r_2r_3}(\mathbf{HA}_k)_{nr_1} (\mathcal{M}\times_3 \mathbf{C}^{\top}_{k-1})_{njr_3}$, and solve $\mathbf{b}\in \mathbb{C}^{M_2 R_2}$ from the least-squares problem $\min_{\mathbf{b}\in \mathbb{C}^{M_2R_2}} \| \mathbf{y}-\mathbf{E}_{(1)}\mathbf{b} \|_F^2 + \gamma \| \mathbf{b} \|_F^2, $
	then update $\mathbf{B}_k =  \mathrm{reshape}(\mathbf{b}, [M_2,R_2])$; 
		\STATE Compute the third-order tensor $\mathcal{F}_{npr_3} = \sum_{r_1,r_2}(\mathcal{G}_k)_{r_1r_2r_3}(\mathbf{HA}_k)_{nr_1} (\mathcal{M}\times_2\mathbf{B}_k^{\top})_{nr_2p}$, and solve $\mathbf{c}\in \mathbb{C}^{P R_3}$ from the least-squares problem
		$\min_{\mathbf{c}\in \mathbb{C}^{PR_3}} \| \mathbf{y}-\mathbf{F}_{(1)}\mathbf{c} \|_F^2 + \gamma \| \mathbf{c} \|_F^2, $
	then update $\mathbf{C}_k = \mathrm{reshape}(\mathbf{c}, [P,R_3])$.
		\STATE End for.
		\STATE Set $\mathcal{G}=\mathcal{G}_L, \mathbf{A} = \mathbf{A}_L, \mathbf{B} = \mathbf{B}_L, \mathbf{C} = \mathbf{C}_L$.
	\end{algorithmic}  
\end{algorithm}

\subsection{Model complexity analysis}
\label{Sec: complexity}
In this subsection, we refer to the analytic methods in reference \cite{tehrani2010comparative} and analyze the running complexity of tensor-based GMP models. Running complexity is the number of calculations that are done on each sample when the model is utilized. Throughout the article, the so-called model complexity exactly means the running complexity. Referred to the comprehensive study for PA models \cite{tehrani2010comparative}, the number of floating-point operations (FLOPs) is a convincing measure for model complexity, and the real addition or multiplication needs 1 FLOP, the complex addition or complex-real multiplication requires 2 FLOPs, the computations for a complex-complex multiplication need 6 FLOPs, and the modulus $|\cdot |$ for a complex number requires 10 FLOPs.

To implement the GMP model, the basis function terms to be generated are $x(t)|x(t-k)|^p$ for each $p$ for all $k=-M_1+1,\cdots, M_2-1$ , other terms can be generated by delaying existing terms. They can be computed by first constructing the $|x(t)|$ with 10 FLOPs, then for fixed $-M_1+1\le k \le M_2-1$ compute the $x(t)|x(t-k)|^p$ for nonlinear order $p$ from 0 to $P-1$ sequentially, hence, 2 FLOPs for each order. Accordingly, these basis terms can be created with $2(P-1)(M_1+M_2-1)+10$ FLOPs. Next, it is to filter the basis terms with coefficients $\mathcal{S}_{ijp}$, which needs $8M_1M_2P-2$ FLOPs. Therefore, the model complexity for GMP model \eqref{GMP} is $ 2(P-1)\left(M_1+M_2-1\right)+8 M_1 M_2 P+8$ FLOPs.

As for tensor-based GMP models, thanks to the representation of tensor networks, the summations on indicators $i$ and $j,p$ are separated, hence, the basis functions $x(t)$ and $|x(t-k)|^p$ can be constructed separately to save many computations. For the GMP-CP model \eqref{GMP-CP}, the basis functions to calculate are merely the real terms $|x(t)|^p$ for each $p$, which need $P+8$ FLOPs. Then for each $r=1,2,\ldots,R$, filtering the basis $x(t)$ and coefficients $a_{ir}$, i.e., the computation for $\sum_{i=0}^{M_1-1}a_{ir}x(t-i)$ requires $8M_1-2$ FLOPs, and the computation for filtering $\sum_{j=0}^{M_2-1}\sum_{p=0}^{P-1} b_{jr} c_{pr} \left| x(t-j)  \right|^p$ needs $10M_2P-2$ FLOPs. Next, computing the complex-complex multiplication for the above summations needs 6 FLOPs. Thereby, for each $r$, there are $10M_2P+8M_1+2$ FLOPs. Finally, the summation with respect to the outermost variable $r$ requires $2R-2$ FLOPs. The whole process needs $R(10M_2P+8M_1+4)+P+6$ FLOPs. 

Similarly, we can analyze the running complexity of GMP-TT and GMP-Tucker models, as shown in Table~\ref{table: complexity}.  

\begin{table}[htbp!]
	\centering
	\caption{The model running complexity}
	\begin{tabular}{|l|l|}
		\hline Model & FLOPs \\
		\hline GMP & $8 M_1 M_2 P+2(P-1)\left(M_1+M_2-1\right)+8$ \\
		\hline GMP-CP & $R\left(10 M_2 P+8 M_1+4\right)+P+6$ \\
		\hline GMP-TT & $R_1(10R_2M_2P+8M_1+4)+P+6$ \\
		\hline GMP-Tucker & $R_1\left(R_2R_3(10M_2P+6)+8M_1+4 \right)+P+6$ \\
		\hline
	\end{tabular}
	\label{table: complexity}
\end{table}

\section{Model reduction via random projections}
\label{sec. 4}
In this section, we reveal the numerical low multilinear-rank structure of the data tensor $\mathcal{M}$, and use random projections computed from its truncated HOSVD to embed the parametric vectors into lower-dimensional spaces, to accelerate the process of model identifications.

\subsection{Random projections for tensor-based GMP models}
In previous tensor-based GMP models, the horizontal slices $\mathcal{M}_{n::}$ of the third-order tensor $\mathcal{M}\in \mathbb{R}^{N\times M_2\times P}$ are Vandermonde matrices
\small
\begin{equation*}
	\begin{bmatrix}
		1 & |x(t_0+n)|  & \cdots & |x(t_0+n)|^P \\
		1 & |x(t_0+n-1)| & \cdots & |x(t_0+n-1)|^P \\
		\vdots & \vdots & \ddots & \vdots \\
		1 & \left|x\left((t_0+n)-M_2+1\right)\right| & \cdots & \left|x\left((t_0+n)-M_2+1\right)\right|^P
	\end{bmatrix},
\end{equation*}
\normalsize
which have a numerically low-rank structure. Therefore, it is possible to reduce the dimensions of $\mathcal{M}$ on modes-$2,3$ by random projections while preserving most of its information. 

Hence, we firstly compute the truncated HOSVD of $\mathcal{M}$ only on modes-2,3 as follows
\begin{equation} \label{Tucker_appro}
	\min _{\tiny \substack{\mathbf{U}_2 \in \mathbb{R}^{ M_2 \times \widetilde{M}_2}, \mathbf{U}_2^{\top} \mathbf{U}_2=\mathbf{I} \\ \mathbf{U}_3 \in \mathbb{R}^{P \times \widetilde{P}}, \mathbf{U}_3^{\top} \mathbf{U}_3=\mathbf{I}}}\left\|\mathcal{M}-\mathcal{M} \times{ }_2 \mathbf{U}_2 \mathbf{U}_2^{\top} \times{ }_3 \mathbf{U}_3 \mathbf{U}_3^{\top}\right\|_F^2,
\end{equation}
where $\widetilde{M}_2\le M_2, \widetilde{P}\le P$. Note that we do not consider the projection on the first mode of $\mathcal{M}$ since the time dimension needs to be consistent with the output vector $\mathbf{y} \in \mathbb{C}^{N}$. The THOSVD \cite{DeLathauwer20001253} and the STHOSVD \cite{Vannieuwenhoven2012A1027} can obtain quasi-optimal solutions for the above problem, but the classical SVDs on the unfolding matrices require uneconomical computational cost. Fortunately, recent randomized algorithms such as the one shown in Algorithm~\ref{alg: R-STHOSVD}, which avoids the SVD processes of unfolding matrices, can achieve high computational efficiency \cite{che2023efficient}.

After the truncated HOSVD of $\mathcal{M}$ to obtain the random matrices $\mathbf{U}_2 \in \mathbb{R}^{M_2 \times \widetilde{M}_2}, \mathbf{U}_3 \in \mathbb{R}^{P \times \widetilde{P}}$ and the core tensor $\widetilde{\mathcal{M}}=\mathcal{M}\times_2 \mathbf{U}_2^{\top}\times _3 \mathbf{U}_3^{\top} \in$ $\mathbb{R}^{N \times \widetilde{M}_2 \times \widetilde{P}}$, we have
\[
\mathcal{M} \approx \mathcal{M} \times_2 \mathbf{U}_2 \mathbf{U}_2^{\top} \times_3 \mathbf{U}_3 \mathbf{U}_3^{\top}=: \widetilde{\mathcal{M}} \times_2 \mathbf{U}_2 \times \mathbf{U}_3,
\]
then it can yield low-dimensional approximations for the tensor-based GMP models by these random projections. 

For example, as for the GMP-CP model \eqref{GMP-CP-vectorform}, it holds that
\begin{align} \label{projected-GMP-CP}
	\mathbf{y} & =\sum_{r=1}^R \mathbf{H} \mathbf{A}_{: r} \odot\left(\mathcal{M} \times_2 \mathbf{B}_{: r} \times_3 \mathbf{C}_{: r}\right) \nonumber \\
	& \approx \sum_{r=1}^R \mathbf{H} \mathbf{A}_{: r} \odot\left(\widetilde{\mathcal{M}} \times_2 \mathbf{U}_2 \times_3 \mathbf{U}_3 \times_2 \mathbf{B}_{: r} \times_3 \mathbf{C}_{: r}\right) \nonumber \\
	& =\sum_{r=1}^R \mathbf{H} \mathbf{A}_{: r} \odot\left(\widetilde{\mathcal{M}} \times_2 \widetilde{\mathbf{B}}_{: r} \times_3 \widetilde{\mathbf{C}}_{: r}\right),
\end{align}
where $\widetilde{\mathbf{B}}_{: r}=\mathbf{U}_2^{\top} \mathbf{B}_{: r} \in \mathbb{C}^{\widetilde{M}_2}$, $\widetilde{\mathbf{C}}_{: r}=\mathbf{U}_3^{\top} \mathbf{C}_{: r} \in \mathbb{C}^{\widetilde{P}}$ are the projected vectors. It follows that the identification of the GMP-CP model \eqref{GMP-CP-identify} can be approximately substituted by the following smaller-scale multilinear least-squares problem
\begin{align}\label{GMP-CP-projection-identify}
	\min _{\tiny \substack{\mathbf{A} \in \mathbb{C}^{M_1 \times R} \\ \widetilde{\mathbf{B}} \in \mathbb{C}^{\widetilde{M}_2 \times R} \\ \widetilde{\mathbf{C}} \in \mathbb{C}^{\widetilde{P} \times R}}}\left\|\mathbf{y}-\sum_{r=1}^R \mathbf{H} \mathbf{A}_{: r} \odot\left(\widetilde{\mathcal{M}} \times_2 \widetilde{\mathbf{B}}_{: r} \times_3 \widetilde{\mathbf{C}}_{: r}\right)\right\|_F^2
	\nonumber \\
	+\gamma\left(\|\mathbf{A}\|_F^2+\|\widetilde{\mathbf{B}}\|_F^2+\|\widetilde{\mathbf{C}}\|_F^2\right),
\end{align}
thus, the computational cost of the subsequent ALS algorithm can be reduced. If we get the solution $\widetilde{\mathbf{B}}\in \mathbb{C}^{\widetilde{M}_2\times R}$ and $\widetilde{\mathbf{C}}\in \mathbb{C}^{\widetilde{P}\times R}$ of problem \eqref{GMP-CP-projection-identify}, then the parameters of the original GMP-CP model can be obtained by back substitutions $\mathbf{B} = \mathbf{U}_2\widetilde{\mathbf{B}}_L$, $\mathbf{C}=\mathbf{U}_3\widetilde{\mathbf{C}}_L$. We call this whole process the random projections-alternating least-squares (RP-ALS) algorithm. The algorithmic framework is described in Algorithm~\ref{alg: GMP-CP-projection-identify}.

The RP-ALS algorithm framework is also applicable to the GMP-TT model and GMP-Tucker model; the computational procedures are completely analogous, so we do not repeat them here.

\begin{remark}
	Note that the RP-ALS algorithm is different from the randomized ALS algorithm proposed in \cite{reynolds2016randomized} for the rank reduction problem of CP decomposition. 
\end{remark}

\begin{algorithm}[htbp!]
	\renewcommand{\algorithmicrequire}{\textbf{Input:}}
	\renewcommand{\algorithmicensure}{\textbf{Output:}}
	\caption{RP-ALS algorithm for GMP-CP model identification}
	\label{alg: GMP-CP-projection-identify}
	\centering
	\begin{algorithmic}[1]
		\REQUIRE Training dataset $\{ x(t),y(t) \}_{t_0}^{t_0+N-1}$, CP-rank $R$, memory lengths $M_1, M_2$, nonlinear order $P$,  projective dimensions $\widetilde{M}_2, \widetilde{P}$, ALS iteration number $L$, and $\gamma \ge 0$.
		\ENSURE GMP-CP model parameters $\mathbf{A}\in \mathbb{C}^{M_1\times R}$, $\mathbf{B}\in \mathbb{C}^{M_2\times R}$, $\mathbf{C}\in \mathbb{C}^{P\times R}$.
		
		\STATE Compute the output vector $\mathbf{y}\in \mathbb{C}^{N}$, matrix $\mathbf{H}\in \mathbb{C}^{N\times M_1}$, and third-order tensor $\mathcal{M}\in \mathbb{C}^{N\times M_2\times P}$. 
		\STATE Employ Algorithm~\ref{alg: R-STHOSVD} to compute the truncated HOSVD \eqref{Tucker_appro} of $\mathcal{M}$ on modes-$2,3$, then obtain the column orthogonal matrices $\mathbf{U}_2\in \mathbb{R}^{M_2\times \widetilde{M}_2}, \mathbf{U}_3\in \mathbb{C}^{P\times \widetilde{P}}$ and the core tensor $\widetilde{\mathcal{M}} \in \mathbb{R}^{N\times \widetilde{M}_2 \times \widetilde{P}}$.
		\STATE Set initial value $\mathbf{A}_0$, $\widetilde{\mathbf{B}}_0$, $\widetilde{\mathbf{C}}_0$, and identify these parameter matrices of the projected GMP-CP model \eqref{projected-GMP-CP} from problem \eqref{GMP-CP-projection-identify} using the ALS method (Algorithm~\ref{alg: GMP-CP-identify}). 
		\STATE Set $\mathbf{A}=\mathbf{A}_L$, $\mathbf{B} = \mathbf{U}_2\widetilde{\mathbf{B}}_L$ and $\mathbf{C}=\mathbf{U}_3\widetilde{\mathbf{C}}_L$.
	\end{algorithmic}  
\end{algorithm} 

\subsection{Approximate error analysis}
Employing the RP-ALS algorithm to tensor-based GMP models actually identifies the projected models via the ALS algorithm; hence, we analyze the upper bound for their approximation error to the original ones.
\begin{theorem}[Error bound for projected GMP-CP model]
	\label{Thm: GMP-CP}
	Suppose an optimal solution of the GMP-CP model \eqref{GMP-CP-vectorform} is $\left\{\mathbf{A}^{\star}, \mathbf{B}^{\star}, \mathbf{C}^{\star}\right\}$. Then, the optimal value of the projected GMP-CP model \eqref{projected-GMP-CP}  satisfies
	\begin{align*}
		& \min\limits_{\tiny \substack{\tiny  \mathbf{A},  \widetilde{\mathbf{B}}, \widetilde{\mathbf{C}}}}\left\|\mathbf{y}-\sum_{r=1}^R \mathbf{H A}_{: r} \odot\left(\widetilde{\mathcal{M}} \times_2 \widetilde{\mathbf{B}}_{: r} \times_3 \widetilde{\mathbf{C}}_{: r}\right)\right\|_F \\
		& \leq \min _{\tiny \substack{\tiny \mathbf{A}, \mathbf{B}, \mathbf{C}}}\left\|\mathbf{y}-\sum_{r=1}^R \mathbf{H} \mathbf{A}_{: r} \odot\left(\mathcal{M} \times_2 \mathbf{B}_{: r} \times_3 \mathbf{C}_{: r}\right)\right\|_F \\ & \quad +\|\mathcal{M}-\widehat{\mathcal{M}}\|_F \sum_{r=1}^R\left\|\mathbf{H A}_{: r}^{\star}\right\|_{\infty}\left\|\mathbf{B}_{: r}^{\star}\right\|_F\left\|\mathbf{C}_{: r}^{\star}\right\|_F,
	\end{align*}
	where $\widehat{\mathcal{M}}=\widetilde{\mathcal{M}}\times _2 \mathbf{U}_2\times _3\mathbf{U}_3=\mathcal{M} \times_2 \mathbf{U}_2 \mathbf{U}_2^{\top} \times_3 \mathbf{U}_3 \mathbf{U}_3^{\top}$ denotes the truncated HOSVD tensor of $\mathcal{M}$.
\end{theorem}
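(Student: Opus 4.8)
The plan is to exploit that the left-hand side is a minimum over $(\mathbf{A},\widetilde{\mathbf{B}},\widetilde{\mathbf{C}})$, so \emph{any} admissible choice already yields an upper bound. First I would plug in the candidate built from the assumed optimal solution of the original model,
\[
\mathbf{A}=\mathbf{A}^{\star},\qquad \widetilde{\mathbf{B}}_{:r}=\mathbf{U}_2^{\top}\mathbf{B}^{\star}_{:r},\qquad \widetilde{\mathbf{C}}_{:r}=\mathbf{U}_3^{\top}\mathbf{C}^{\star}_{:r}.
\]
The conceptual crux is the following algebraic collapse: applying the contraction identity $(\mathcal{X}\times_k\mathbf{Q})\times_k\mathbf{v}=\mathcal{X}\times_k(\mathbf{Q}^{\top}\mathbf{v})$ on modes $2$ and $3$, each projected summand reduces to
\[
\widetilde{\mathcal{M}}\times_2\widetilde{\mathbf{B}}_{:r}\times_3\widetilde{\mathbf{C}}_{:r}
=\bigl(\widetilde{\mathcal{M}}\times_2\mathbf{U}_2\times_3\mathbf{U}_3\bigr)\times_2\mathbf{B}^{\star}_{:r}\times_3\mathbf{C}^{\star}_{:r}
=\widehat{\mathcal{M}}\times_2\mathbf{B}^{\star}_{:r}\times_3\mathbf{C}^{\star}_{:r}.
\]
Hence the left-hand minimum is bounded above by the residual of the \emph{original} model with $\mathcal{M}$ replaced by its truncated HOSVD tensor $\widehat{\mathcal{M}}$, which is exactly the form that exposes the approximation error $\mathcal{M}-\widehat{\mathcal{M}}$.

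Next I would write $\mathcal{M}=\widehat{\mathcal{M}}+(\mathcal{M}-\widehat{\mathcal{M}})$ and apply the triangle inequality to this residual, separating it into the residual of the original model evaluated at $\{\mathbf{A}^{\star},\mathbf{B}^{\star},\mathbf{C}^{\star}\}$ plus an error term in which every occurrence of $\mathcal{M}$ is replaced by $\mathcal{M}-\widehat{\mathcal{M}}$. Because $\{\mathbf{A}^{\star},\mathbf{B}^{\star},\mathbf{C}^{\star}\}$ is an optimal solution of the GMP-CP problem \eqref{GMP-CP-identify} (without the regularizer, matching the norm in the statement), the first term equals precisely the original minimum appearing on the right-hand side of the claim, so it requires no further work.

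It then remains to control the error term by $\|\mathcal{M}-\widehat{\mathcal{M}}\|_F\sum_{r=1}^R\|\mathbf{HA}^{\star}_{:r}\|_{\infty}\|\mathbf{B}^{\star}_{:r}\|_F\|\mathbf{C}^{\star}_{:r}\|_F$. I would first pass the norm through the sum over $r$ by the triangle inequality, then peel off the factor $\|\mathbf{HA}^{\star}_{:r}\|_{\infty}$ from each summand using the elementary bound $\|\mathbf{u}\odot\mathbf{w}\|_2\le\|\mathbf{u}\|_{\infty}\|\mathbf{w}\|_2$ for the Hadamard product. What is left is the main technical estimate, a multilinear Cauchy--Schwarz inequality: writing $\mathcal{E}=\mathcal{M}-\widehat{\mathcal{M}}$, one has $(\mathcal{E}\times_2\mathbf{b}\times_3\mathbf{c})_n=\sum_{j,p}\mathcal{E}_{njp}b_jc_p$, and Cauchy--Schwarz in the $(j,p)$ pair followed by summation over $n$ gives
\[
\|\mathcal{E}\times_2\mathbf{b}\times_3\mathbf{c}\|_2\le\|\mathcal{E}\|_F\,\|\mathbf{b}\|_2\,\|\mathbf{c}\|_2.
\]
Specializing $\mathbf{b}=\mathbf{B}^{\star}_{:r}$, $\mathbf{c}=\mathbf{C}^{\star}_{:r}$ and summing over $r$ delivers exactly the stated bound. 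The substitution-and-collapse step in the first paragraph is the genuine idea of the argument, whereas this multilinear Cauchy--Schwarz estimate is the main place where some care is needed; everything else is routine norm bookkeeping.
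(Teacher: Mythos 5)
Your proposal is correct and takes essentially the same route as the paper: substitute the candidate $(\mathbf{A}^{\star},\mathbf{U}_2^{\top}\mathbf{B}^{\star},\mathbf{U}_3^{\top}\mathbf{C}^{\star})$, collapse via $\mathbf{U}_2^{\top}\mathbf{U}_2=\mathbf{I}$, $\mathbf{U}_3^{\top}\mathbf{U}_3=\mathbf{I}$ to the residual with $\widehat{\mathcal{M}}$, split by the triangle inequality at the optimal triple, and peel off $\left\|\mathbf{H}\mathbf{A}^{\star}_{:r}\right\|_{\infty}$ from the Hadamard product. The only cosmetic difference is the final estimate, where the paper writes $\mathcal{E}\times_2\mathbf{b}\times_3\mathbf{c}=\mathbf{E}_{(1)}\left(\mathbf{c}\otimes\mathbf{b}\right)$ and uses $\left\|\mathbf{E}_{(1)}\right\|_2\le\left\|\mathcal{E}\right\|_F$, which is precisely your multilinear Cauchy--Schwarz inequality in matrix form.
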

\begin{proof}
	\small
	\begin{align*}
		& \min\limits_{\tiny \substack{\mathbf{A},  \widetilde{\mathbf{B}}, \widetilde{\mathbf{C}} }} \left\|\mathrm{y}-\sum_{r=1}^R \mathbf{H A}_{:r}\odot \left(\widetilde{\mathcal{M}} \times_2 \widetilde{\mathbf{B}}_{: r} \times_3 \widetilde{\mathbf{C}}_{: r}\right)\right\|_F \\
		& =\min\limits _{\tiny \substack{\mathbf{A} , \mathbf{B}, \mathbf{C}}}\left\|\mathbf{y}-\sum_{r=1}^R \mathbf{H} \mathbf{A}_{: r} \odot\left(  \widehat{\mathcal{M}} \times_2 \mathbf{B}_{: r} \times_3 \mathbf{C}_{: r}\right)\right\|_F \\
		& \leq \min _{\tiny \substack{\mathbf{A}, \mathbf{B}, \mathbf{C}}}\left\|\mathbf{y}-\sum_{r=1}^R \mathbf{H} \mathbf{A}_{: r} \odot\left(\mathcal{M} \times_2 \mathbf{B}_{: r} \times_3 \mathbf{C}_{: r}\right)\right\|_F \\
		& \quad + \left\|\sum_{r=1}^R \mathbf{H} \mathbf{A}_{: r} \odot\left((\mathcal{M}-\widehat{\mathcal{M}}) \times_2 \mathbf{B}_{: r} \times_3 \mathbf{C}_{: r}\right)\right\|_F \\
		& \leq \text{\rm m(GMP-CP)}+\sum_{r=1}^R\left\|\mathbf{H A}_{: r}^{\star} \odot\left(\mathbf{M}_{(1)}-\widehat{\mathbf{M}}_{(1)}\right)\left(\mathbf{C}_{: r}^{\star} \otimes \mathbf{B}_{: r}^{\star}\right)\right\|_F \\
        &\leq \text{\rm m(GMP-CP)}+\sum_{r=1}^R\left\|\mathbf{H A}_{:r}^{\star}\right\|_{\infty}\left\|(\mathbf{M}_{(1)}-\widehat{\mathbf{M}}_{(1)})\left(\mathbf{C}_{: r}^{\star} \otimes \mathbf{B}_{: r}^{\star}\right)\right\|_F 
\end{align*}
\begin{align*}
	&\leq \text{\rm m(GMP-CP)}+\|\mathbf{M}_{(1)}-\widehat{\mathbf{M}}_{(1)}\|_2\sum_{r=1}^R\left\|\mathbf{H A}_{:r}^{\star}\right\|_{\infty}\left\|\mathbf{C}_{: r}^{\star}\right\|_F\left\|\mathbf{B}_{: r}^{\star}\right\|_F \\
	& \leq \text{\rm m(GMP-CP)}+\|\mathcal{M}-\widehat{\mathcal{M}}\|_F \sum_{r=1}^R\left\|\mathbf{H A}_{: r}^{\star}\right\|_{\infty}\left\|\mathbf{B}_{: r}^{\star}\right\|_F\left\|\mathbf{C}_{: r}^{\star}\right\|_F, 
\end{align*}
	\normalsize where m(GMP-CP) denotes the optimal accuracy of the GMP-CP model as the second equation in this theorem, and the last inequality is obtained from the fact that the spectral norm of a matrix is not larger than its Frobenius norm.
\end{proof}
Theorem~\ref{Thm: GMP-CP} signifies that the difference of the optimal accuracy between the projected GMP-CP model \eqref{projected-GMP-CP} and GMP-CP model \eqref{GMP-CP-vectorform} can be controlled by the approximate error of the truncated HOSVD on the data tensor $\mathcal{M}$. Hereby, it provides a theoretical guarantee to identify the GMP-CP model by the RP-ALS algorithm. It also holds for the GMP-TT model and GMP-Tucker model.

As for the GMP-TT model, the first step of the RP-ALS algorithm also generates random projections, and the projected GMP-TT model is
\begin{equation} \label{projected-GMP-TT}
	\mathbf{y}  = \sum_{r_1=1}^{R_1} \mathbf{H A}_{:r_1} \odot \sum_{r_2=1}^{R_2} (\widetilde{\mathcal{M}} \times _2 \widetilde{\mathcal{B}}_{r_1:r_2} \times _3 \widetilde{\mathbf{C}}_{r_2:}),
\end{equation}
where $\widetilde{\mathcal{B}} = \mathcal{B}\times_2 \mathbf{U}_2^{\top}\in \mathbb{C}^{R_1\times \widetilde{M}_1\times R_2}, \widetilde{\mathbf{C}} = \mathbf{U}_3^{\top}\mathbf{C} \in \mathbb{C}^{\widetilde{P}\times R_2}$.
\begin{theorem}[Error bound for projected GMP-TT model]
	Suppose an optimal solution of the GMP-TT model \eqref{GMP-TT-vector} is $\left\{\mathbf{A}^{\star}, \mathcal{B}^{\star}, \mathbf{C}^{\star}\right\}$. Then, the optimal value of the projected GMP-TT model \eqref{projected-GMP-TT} satisfies
	\small
	\begin{align*}
		& \min\limits_{\tiny \substack{\mathbf{A},
				\widetilde{\mathcal{B}},\widetilde{\mathbf{C}}}}\left\|\mathbf{y}-\sum_{r_1,r_2=1}^{R_1,R_2} \mathbf{H A} _{r_1} \odot\left(\widetilde{\mathcal{M}} \times_2 \widetilde{\mathcal{B}}_{r_1: r_2} \times_3 \widetilde{\mathbf{C}}_{r_2:}\right)\right\|_F \\
		& \leq \min\limits_{\tiny \substack{\mathbf{A}, 
				\mathcal{B},\mathbf{C}}}\left\|\mathbf{y}-\sum_{r_1,r_2=1}^{R_1,R_2} \mathbf{H A} _{r_1} \odot\left(\mathcal{M} \times_2 \mathcal{B}_{r_1: r_2} \times_3 \mathbf{C}_{r_2:}\right)\right\|_F \\
		& \quad +\|\mathcal{M}-\widehat{\mathcal{M}}\|_F \sum_{r_1=1}^{R_1} \sum_{r_2=1}^{R_2}\left\|\mathbf{H A}_{: r_1}^{\star}\right\|_{\infty}\left\|\mathcal{B}_{r_1: r_2}^{\star}\right\|_F\left\|\mathbf{C}_{r_2:}^{\star}\right\|_F.
	\end{align*}
\end{theorem}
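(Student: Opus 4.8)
The plan is to follow the template of the proof of Theorem~\ref{Thm: GMP-CP} essentially verbatim, adapting the single CP index $r$ to the pair $(r_1,r_2)$ of the TT format. The argument separates into an \emph{exact reduction step} (replacing $\widetilde{\mathcal{M}}$ by the truncated tensor $\widehat{\mathcal{M}}$) followed by a \emph{perturbation estimate} controlled by $\|\mathcal{M}-\widehat{\mathcal{M}}\|_F$.

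First I would establish the opening equality: minimizing over the projected parameters $(\mathbf{A}, \widetilde{\mathcal{B}}, \widetilde{\mathbf{C}})$ against $\widetilde{\mathcal{M}}$ is the \emph{same} optimization as minimizing over $(\mathbf{A}, \mathcal{B}, \mathbf{C})$ against $\widehat{\mathcal{M}}=\widetilde{\mathcal{M}} \times_2 \mathbf{U}_2 \times_3 \mathbf{U}_3$. The key fiberwise identity is $\widehat{\mathcal{M}} \times_2 \mathcal{B}_{r_1:r_2} \times_3 \mathbf{C}_{r_2:} = \widetilde{\mathcal{M}} \times_2 (\mathbf{U}_2^\top \mathcal{B}_{r_1:r_2}) \times_3 (\mathbf{U}_3^\top \mathbf{C}_{r_2:}) = \widetilde{\mathcal{M}} \times_2 \widetilde{\mathcal{B}}_{r_1:r_2} \times_3 \widetilde{\mathbf{C}}_{r_2:}$, which follows from the composition rule for mode products stated in the earlier proposition. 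Since $\mathbf{U}_2, \mathbf{U}_3$ have orthonormal columns, the fiberwise projection maps $\mathcal{B} \mapsto \mathcal{B} \times_2 \mathbf{U}_2^\top$ and $\mathbf{C} \mapsto \mathbf{U}_3^\top \mathbf{C}$ are surjective onto $\mathbb{C}^{R_1 \times \widetilde{M}_2 \times R_2}$ and $\mathbb{C}^{\widetilde{P} \times R_2}$ respectively (one inverts via $\mathbf{U}_2$ and $\mathbf{U}_3$), so the two feasible sets produce identical objective values and the two minima coincide.

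Next I would bound this $\widehat{\mathcal{M}}$-minimum by substituting the optimal triple $\{\mathbf{A}^\star, \mathcal{B}^\star, \mathbf{C}^\star\}$ of the original GMP-TT problem, splitting $\widehat{\mathcal{M}} = \mathcal{M} - (\mathcal{M} - \widehat{\mathcal{M}})$, and applying the triangle inequality for the Frobenius norm. This yields m(GMP-TT) as the first term (the optimal triple attains it) plus the residual $\|\sum_{r_1,r_2} \mathbf{H}\mathbf{A}^\star_{:r_1} \odot ((\mathcal{M} - \widehat{\mathcal{M}}) \times_2 \mathcal{B}^\star_{r_1:r_2} \times_3 \mathbf{C}^\star_{r_2:})\|_F$. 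I would then estimate the residual term by term: split the double sum by the triangle inequality, rewrite each summand through the unfolding identity $(\mathcal{M} - \widehat{\mathcal{M}}) \times_2 \mathcal{B}^\star_{r_1:r_2} \times_3 \mathbf{C}^\star_{r_2:} = (\mathbf{M}_{(1)} - \widehat{\mathbf{M}}_{(1)})(\mathbf{C}^\star_{r_2:} \otimes \mathcal{B}^\star_{r_1:r_2})$, peel off the Hadamard factor via $\|\mathbf{a} \odot \mathbf{b}\|_F \leq \|\mathbf{a}\|_\infty \|\mathbf{b}\|_F$, apply $\|\mathbf{M}\mathbf{v}\|_F \leq \|\mathbf{M}\|_2 \|\mathbf{v}\|_F$ with $\|\mathbf{C}^\star_{r_2:} \otimes \mathcal{B}^\star_{r_1:r_2}\|_F = \|\mathbf{C}^\star_{r_2:}\|_F \|\mathcal{B}^\star_{r_1:r_2}\|_F$, and finally use $\|\mathbf{M}_{(1)} - \widehat{\mathbf{M}}_{(1)}\|_2 \leq \|\mathbf{M}_{(1)} - \widehat{\mathbf{M}}_{(1)}\|_F = \|\mathcal{M} - \widehat{\mathcal{M}}\|_F$. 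Collecting the factors produces the stated bound.

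The calculations are genuinely routine given the CP precedent; the only place demanding care is the opening equality, where I must confirm surjectivity of the fiberwise projection maps so the reduction to $\widehat{\mathcal{M}}$ is \emph{exact} rather than merely an inequality, and keep the Kronecker ordering $(\mathbf{C} \otimes \mathcal{B})$ consistent with the reverse-lexicographic unfolding convention of the paper. I expect the double-index bookkeeping of the TT core $\mathcal{B}$—consistently treating $\mathcal{B}_{r_1:r_2}$ as an ordinary mode-2 fiber vector throughout the unfolding and Kronecker manipulations—to be the main thing to get right.
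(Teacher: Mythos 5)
Your proposal is correct and follows essentially the same route as the paper: the paper proves the GMP-TT bound by declaring it ``analogous to the proof of Theorem~\ref{Thm: GMP-CP},'' and your argument is exactly that analogy---the exact reduction to $\widehat{\mathcal{M}}$ via surjectivity of the fiberwise maps $\mathcal{B}\mapsto\mathcal{B}\times_2\mathbf{U}_2^{\top}$, $\mathbf{C}\mapsto\mathbf{U}_3^{\top}\mathbf{C}$, followed by the triangle inequality at the optimal triple and the chain of Hadamard-$\ell_\infty$, Kronecker, spectral-norm, and Frobenius-norm estimates, with the single CP index replaced by the pair $(r_1,r_2)$. Your explicit attention to the surjectivity justifying the opening equality is a point the paper leaves implicit, but it is the same underlying argument.
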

\begin{proof}
	Analogous to the proof of Theorem~\ref{Thm: GMP-CP}.
\end{proof}

Likewise, if the RP-ALS algorithm is employed to identify the GMP-Tucker model, then the ALS method is actually applied to the following projected one
\begin{align}\label{projected-GMP-Tucker}
	\mathbf{y}=\sum_{r_1,r_2,r_3=1}^{R_1,R_2,R_3}\mathcal{G}_{r_1r_2r_3}\mathbf{H}\mathbf{A}_{:r_1}\odot \widetilde{\mathcal{M}}\times _2\widetilde{\mathbf{B}}_{:r_2}\times _3 \widetilde{\mathbf{C}}_{:r_3},
\end{align}
where $\widetilde{\mathbf{B}} = \mathbf{U}_2^{\top}\mathbf{B}\in \mathbb{C}^{\widetilde{M}_1\times R_2}, \widetilde{\mathbf{C}} = \mathbf{U}_3^{\top}\mathbf{C} \in \mathbb{C}^{\widetilde{P}\times R_3}$.

\begin{theorem}[Error bound for projected GMP-Tucker model]
	Suppose an optimal solution of the GMP-Tucker model \eqref{GMP-Tucker-vector} is $\left\{\mathcal{G}^{\star},\mathbf{A}^{\star}, \mathbf{B}^{\star}, \mathbf{C}^{\star}\right\}$. Then, the optimal value of the projected GMP-Tucker model \eqref{projected-GMP-Tucker} satisfies
	\small
	\begin{align*}
		& \min_{\tiny \substack{\mathcal{G}, \mathbf{A}, \widetilde{\mathbf{B}}, \widetilde{\mathbf{C}}}} \left\|\mathbf{y}-\sum_{r_1,r_2,r_3=1}^{R_1,R_2,R_3}\mathcal{G}_{r_1r_2r_3} \mathbf{H}\mathbf{A}_{:r_1}\odot \widetilde{\mathcal{M}}\times _2\widetilde{\mathbf{B}}_{:r_2}\times _3 \widetilde{\mathbf{C}}_{:r_3}\right\|_F \\
		& \leq \min\limits_{\tiny \substack{\mathcal{G}, \mathbf{A}, \\ \mathbf{B}, \mathbf{C}}}\left\|\mathbf{y}-\sum_{r_1,r_2,r_3=1}^{R_1,R_2,R_3}\mathcal{G}_{r_1r_2r_3} \mathbf{H}\mathbf{A}_{:r_1}\odot \mathcal{M}\times _2\mathbf{B}_{:r_2}\times _3 \mathbf{C}_{:r_3}\right\|_F \\
		& \  +\|\mathcal{M}-\widehat{\mathcal{M}}\|_F \sum_{r_1,r_2,r_3=1}^{R_1,R_2,R_3}|\mathcal{G}_{r_1r_2r_3}| \left\|\mathbf{H A}_{: r_1}^{\star}\right\|_{\infty}\left\|\mathbf{B}_{: r_2}^{\star}\right\|_F\left\|\mathbf{C}_{:r_3}^{\star}\right\|_F.
	\end{align*}
\end{theorem}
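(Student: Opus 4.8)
The plan is to follow the template of the proof of Theorem~\ref{Thm: GMP-CP} almost verbatim, the only genuinely new ingredient being the scalar core entries $\mathcal{G}_{r_1r_2r_3}$ and the resulting triple summation. First I would rewrite the projected minimization in terms of the truncated HOSVD tensor $\widehat{\mathcal{M}}=\widetilde{\mathcal{M}}\times_2\mathbf{U}_2\times_3\mathbf{U}_3$. Using the composition rule for a matrix mode product followed by a vector mode product, namely $(\mathcal{X}\times_k\mathbf{Q})\times_k\mathbf{v}=\mathcal{X}\times_k(\mathbf{Q}^{\top}\mathbf{v})$, together with the definitions $\widetilde{\mathbf{B}}=\mathbf{U}_2^{\top}\mathbf{B}$ and $\widetilde{\mathbf{C}}=\mathbf{U}_3^{\top}\mathbf{C}$, one obtains the pointwise identity $\widehat{\mathcal{M}}\times_2\mathbf{B}_{:r_2}\times_3\mathbf{C}_{:r_3}=\widetilde{\mathcal{M}}\times_2\widetilde{\mathbf{B}}_{:r_2}\times_3\widetilde{\mathbf{C}}_{:r_3}$. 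Since $\mathbf{U}_2^{\top}$ and $\mathbf{U}_3^{\top}$ are surjective, letting $\mathbf{B},\mathbf{C}$ range over the full spaces is equivalent, at the level of the objective, to letting $\widetilde{\mathbf{B}},\widetilde{\mathbf{C}}$ range over the reduced spaces, so the projected optimal value equals $\min_{\mathcal{G},\mathbf{A},\mathbf{B},\mathbf{C}}\|\mathbf{y}-\sum\mathcal{G}_{r_1r_2r_3}\mathbf{HA}_{:r_1}\odot\widehat{\mathcal{M}}\times_2\mathbf{B}_{:r_2}\times_3\mathbf{C}_{:r_3}\|_F$.

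Next I would insert $\mathcal{M}$ in place of $\widehat{\mathcal{M}}$ through the triangle inequality, evaluating the perturbation term at the assumed optimal solution $\{\mathcal{G}^\star,\mathbf{A}^\star,\mathbf{B}^\star,\mathbf{C}^\star\}$ of the unprojected GMP-Tucker model. This bounds the projected optimal value by the optimal accuracy of the GMP-Tucker model plus the residual $\|\sum\mathcal{G}^\star_{r_1r_2r_3}\mathbf{HA}^\star_{:r_1}\odot(\mathcal{M}-\widehat{\mathcal{M}})\times_2\mathbf{B}^\star_{:r_2}\times_3\mathbf{C}^\star_{:r_3}\|_F$; the first piece is exactly the optimal accuracy precisely because the starred factors attain the unprojected minimum. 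Finally, I would push this residual through the same chain of elementary bounds used in Theorem~\ref{Thm: GMP-CP}: apply subadditivity over the triple index $(r_1,r_2,r_3)$ and pull out each scalar $|\mathcal{G}^\star_{r_1r_2r_3}|$; rewrite $(\mathcal{M}-\widehat{\mathcal{M}})\times_2\mathbf{B}^\star_{:r_2}\times_3\mathbf{C}^\star_{:r_3}=(\mathbf{M}_{(1)}-\widehat{\mathbf{M}}_{(1)})(\mathbf{C}^\star_{:r_3}\otimes\mathbf{B}^\star_{:r_2})$ by the mode-unfolding property; bound the Hadamard product by $\|\mathbf{HA}^\star_{:r_1}\|_\infty$ times the $\ell_2$ norm of the remaining factor; bound the matrix--vector product using $\|\mathbf{M}_{(1)}-\widehat{\mathbf{M}}_{(1)}\|_2$ and $\|\mathbf{C}^\star_{:r_3}\otimes\mathbf{B}^\star_{:r_2}\|_F=\|\mathbf{C}^\star_{:r_3}\|_F\|\mathbf{B}^\star_{:r_2}\|_F$; and lastly replace the spectral norm of the unfolding by the Frobenius norm $\|\mathcal{M}-\widehat{\mathcal{M}}\|_F$. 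Carrying the weights $|\mathcal{G}^\star_{r_1r_2r_3}|$ along yields the stated bound exactly.

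The main obstacle I anticipate is not any single inequality, since each is routine, but the careful bookkeeping in the equivalence step: verifying that the matrix/vector mode-product composition commutes correctly across modes $2$ and $3$ in the presence of the core tensor, and confirming that the surjectivity of $\mathbf{U}_2^{\top},\mathbf{U}_3^{\top}$ genuinely forces the projected and $\widehat{\mathcal{M}}$-based minimizations to share the same optimal value rather than merely having one dominate the other. Once that identity is pinned down, the remainder is a direct transcription of the GMP-CP argument with the additional scalar factors $|\mathcal{G}^\star_{r_1r_2r_3}|$ absorbed into the triple sum.
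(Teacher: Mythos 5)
Your proposal is correct and matches the paper's intended argument: the paper's proof of this theorem is literally ``Analogous to the proof of Theorem~\ref{Thm: GMP-CP},'' and your write-up is exactly that transcription, including the equivalence of the projected and $\widehat{\mathcal{M}}$-based problems via surjectivity of $\mathbf{U}_2^{\top},\mathbf{U}_3^{\top}$, the triangle inequality evaluated at the starred optimum, and the chain of bounds (mode-$1$ unfolding with the Kronecker identity, the $\|\cdot\|_{\infty}$ bound on the Hadamard factor, $\|\mathbf{C}^{\star}_{:r_3}\otimes\mathbf{B}^{\star}_{:r_2}\|_F=\|\mathbf{C}^{\star}_{:r_3}\|_F\|\mathbf{B}^{\star}_{:r_2}\|_F$, and spectral norm bounded by Frobenius norm), with the scalars $|\mathcal{G}^{\star}_{r_1r_2r_3}|$ carried through the triple sum. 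Your flagged ``obstacle'' is also handled correctly, and in fact for the stated inequality only one direction of that equivalence is needed, since $\widetilde{\mathbf{B}}=\mathbf{U}_2^{\top}\mathbf{B}^{\star}$, $\widetilde{\mathbf{C}}=\mathbf{U}_3^{\top}\mathbf{C}^{\star}$ are admissible points of the projected problem.
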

\begin{proof}
	Analogous to the proof of Theorem~\ref{Thm: GMP-CP}.
\end{proof}

\section{Numerical results}
\label{sec. 5}
In this section, we conduct numerical experiments to test the efficiency of our proposed models and algorithms. The experiments are performed on a computer with an AMD Ryzen $7$ $4800 \mathrm{U}$ CPU at $1.80 \mathrm{GHz}$ and 16GB of RAM using MATLAB 2023b. We make use of the open-source toolkit `Tensor Toolbox' \cite{Brett2023} to facilitate the implementation of tensor multiplications.

We simulate the following orthogonal
frequency-division multiplexing (OFDM) system \cite{tellado2003maximum} as the inputs to a PA system. Specifically, the OFDM system with FFT length of 2048, 1584 communication subcarriers, cyclic prefix length of 72, a total of 28 symbols, and modulation mode of $16$ QAM is simulated. The baseband signals in the system are randomly generated by the MATLAB command $\operatorname{randi}([0,1])$. Then, we put the generated input signal $x(t)$ into the PA model $\text{PA(3)}$ in \cite[Table III]{Alina2016603} with memory depth $11$, and add the Gaussian noise of 50 signal-to-noise ratio level to get the nonlinear output signals $y(t)$. We generate 30688 input-output data $\{x(t), y(t)\}_1^{30688}$. Fig. \ref{fig: power} shows the power of this PA system; it is a nonlinear process with memory effect and the nonlinear distortion becomes more serious at higher power.

\begin{figure}[htbp!]
	\centering
	\includegraphics[width=0.38\textwidth]{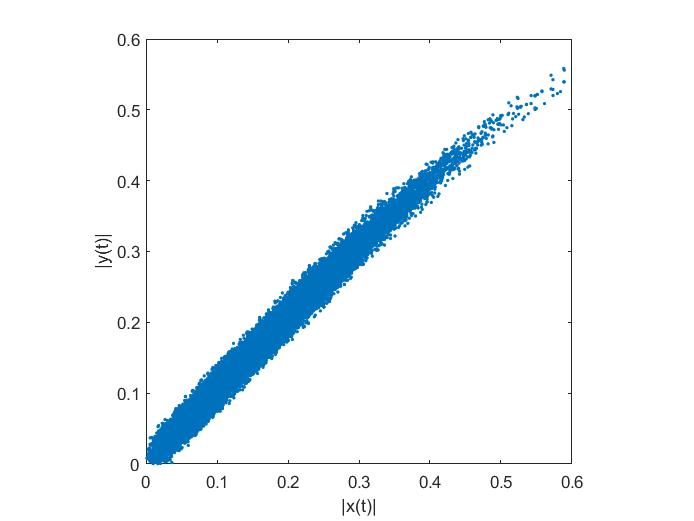}
	\caption{The output power of the simulated PA data.}
	\label{fig: power}
\end{figure}

To ensure the continuity of the training data, we take $N=$ 1024 input-output signals $\{x(t), y(t)\}_{100}^{1123}$ to be used for model identifications, and a total of 30639 data $\{x(t), y(t)\}_{20}^{30668}$ to test the model performance by computing the NMSE \eqref{NMSE}. To make the models achieve better performance, we adjust the penalty parameter $\gamma$ to take appropriate values; the initial values of iterative algorithms for model identification are randomly generated.

We first give some explanations for the following figures and tables that present the numerical results. The term labeled `NumofPara' refers to the number of nonzero parameters, the table column labeled `Training time' indicates the amount of time taken to identify the models, while the column labeled `Simulation time’ shows the computational time for simulating the outputs of testing data. To exclude the influence of randomness, the time and NMSE are averaged over 100 independent experiments. We compare the experimental results of the models and algorithms under two different hyper-parameter settings, concretely, the memory depths and nonlinearity order $(M_1, M_2, P)$ are taken as $(11,10,8)$ and $(10,8,6)$ respectively.

To obtain better performance of the GMP model as the comparative benchmark, we tune the penalty parameter $\gamma$ by a constant step of $10^{-5}$, as shown in Fig. \ref{fig: 1} (a) and (c). For $(M_1, M_2, P)=(11,10,8)$, Fig. \ref{fig: 1} (a) shows that the NMSE of the GMP model via ridge regression \eqref{GMP-identify} under different penalty parameter $\gamma$, the optimal $\gamma=4.8\times 10^{-4}$ and NMSE achieves $-49.3635$ dB. We test the numerical results of  GMP model via LASSO regression under various penalty parameters. For a variety of penalty parameters,
the number of iterations required for the NMSE to reach $-49$ dB is more than $1500$, as shown in Fig. \ref{fig: 1} (b), which presents the NMSEs for GMP model via LASSO  \eqref{GMP-LASSO} with $\gamma = 3\times 10^{-4}$ identified by the FISTA and PGD algorithms, respectively. The FISTA, employed in the following experiments, outperforms the PGD algorithm previously used in \cite{wang2023pruning}. To ensure the model achieves good performance without significantly prolonging the training time due to excessive iterations, we select an appropriate iteration number of 2000 based on multiple tests. Fig. \ref{fig: 1} (c) reports the NMSE performance and model sparsity after $2000$ FISTA iterations under various penalty parameters $\gamma$. To trade off the NMSE and model sparsity, we choose $\gamma=2.7\times 10^{-4}$ such that the sparse model has as few nonzero parameters as possible while maintaining the NMSE comparable to the full GMP model. Fig. \ref{fig: 1} (d) shows the performance of our proposed models identified by ALS algorithm. Their NMSEs decrease rapidly and converge to stable values within several iterations, which is comparable to the GMP model. For $(M_1, M_2, P)=(10,8,6)$, we use the same method as above to tune the penalty parameters. 

\begin{figure}[htbp!]
	\centering
	\subfloat[\tiny]{
		\includegraphics[width=0.24\textwidth]{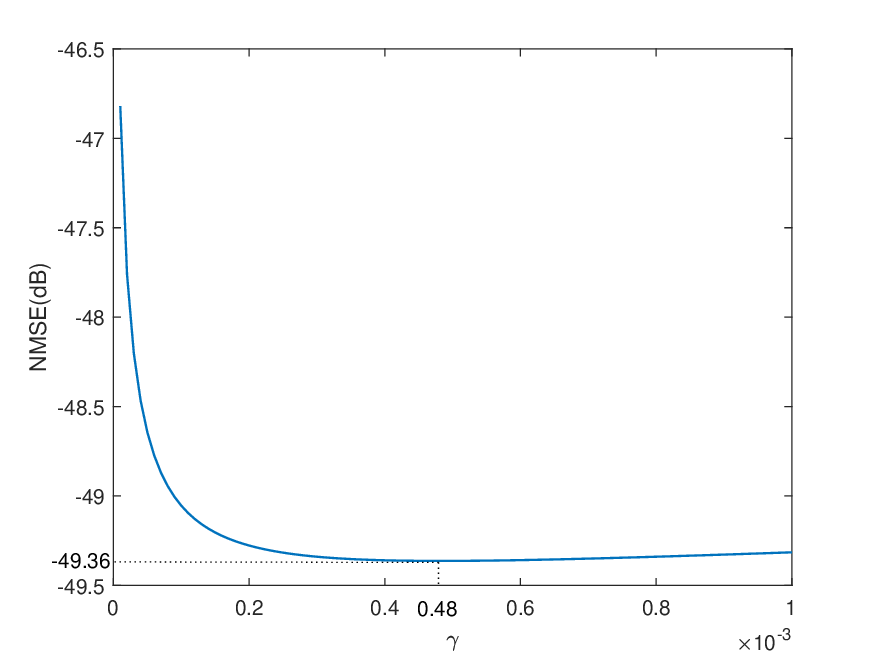}} 
	\subfloat[]{
		\includegraphics[width=0.24\textwidth]{ 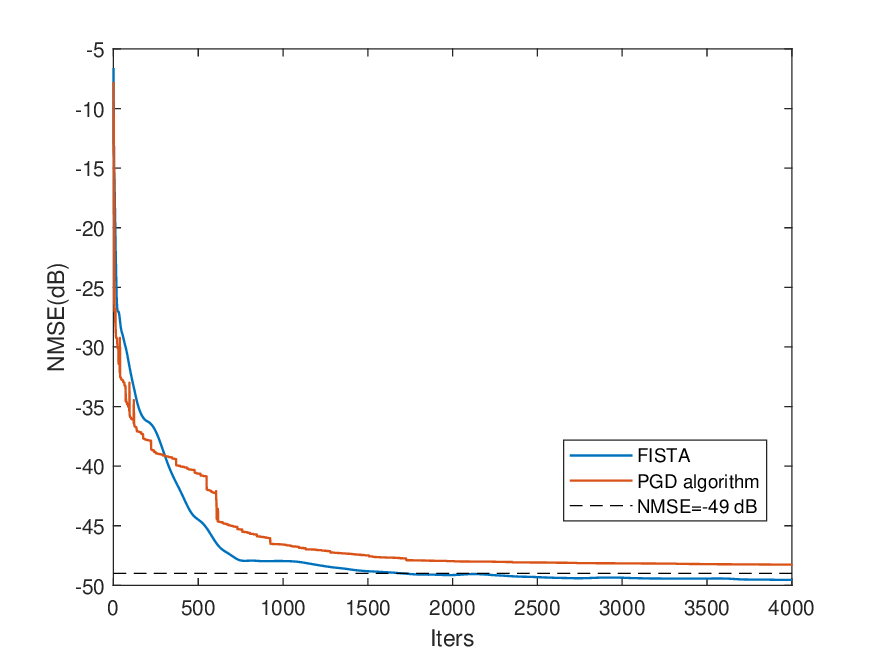}}  \\
	\subfloat[]{
		\includegraphics[width=0.24\textwidth]{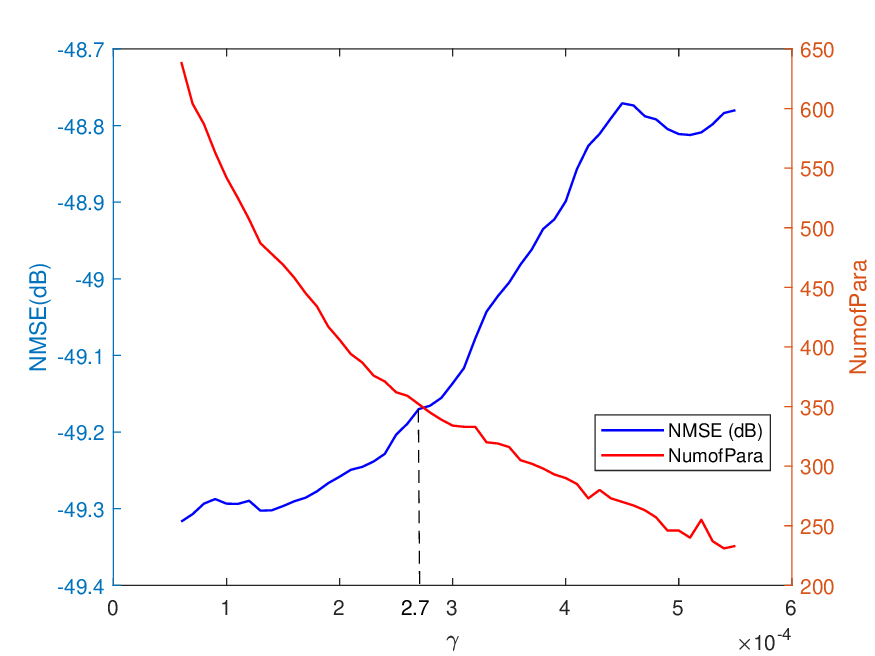}}
	\subfloat[]{
		\includegraphics[width=0.24\textwidth]{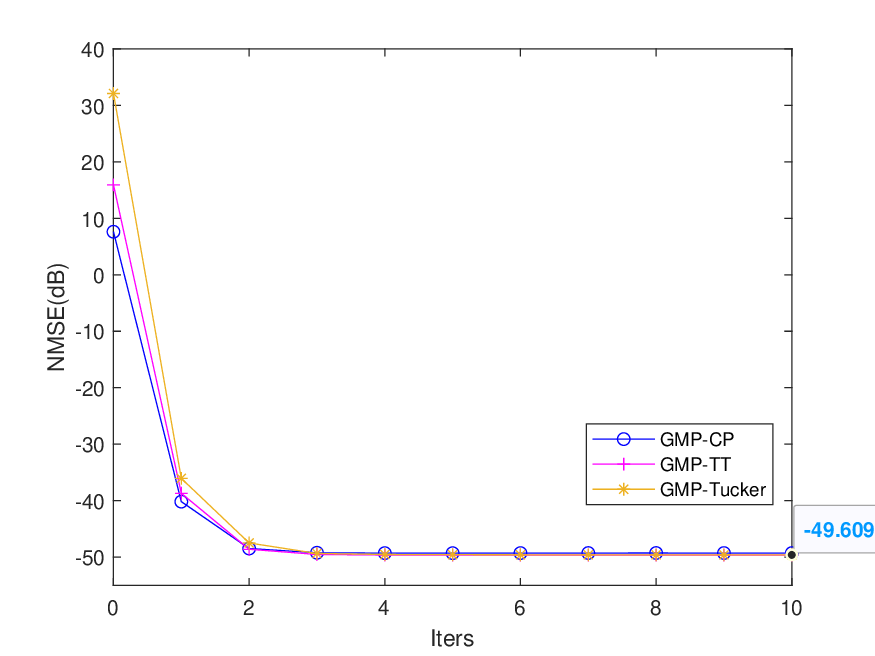}}
	\caption{The hyper-parameters $(M_1,M_2,P)=(11,10,8)$. (a) The NMSE of full GMP model trained by LS method varies with the penalty parameter $\gamma$. (b) The NMSE curve of the sparse GMP model via LASSO  identified by FISTA and PGD algorithm under  $\gamma=3\times 10^{-4}$. (c) Under various penalty parameter $\gamma$, the double y-axis graph for the curves of Numofpara and NMSE of the sparse GMP model identified by FISTA. (d) The NMSE of tensor-based GMP models converge rapidly with ALS iterations, and reach a satisfactory level.}
	\label{fig: 1}
\end{figure}

Table~\ref{table: comparisonALS} shows the concrete statistical data for these models. It can be seen that our proposed tensor-based GMP models can greatly reduce the number of parameters and save time for simulating the output signals, while their NMSEs are comparable to the GMP model. The training time for tensor-based GMP models is the sum of the computation of tensors $\mathcal{M}$ and $\mathbf{H}$, and three ALS iterations. The training time for GMP-CP or GMP-TT models is slightly longer than that for the GMP model via the LS method, and the training time for GMP-Tucker model is relatively longer, but lower than that of the sparse GMP model via LASSO, for which $2000$ and $1400$ FISTA iterations are performed to ensure good model performances in the two experiments, respectively. In some practices, the model identification process can be done offline, but the simulation for the future input signals based on trained models is a real-time process. The tensor-based GMP models exhibit great improvements in terms of the simulation time, especially the GMP-CP model, which offers nearly 9x speedup compared to the full GMP model.
\begin{table*}[htbp!]
	\centering
	\caption{Numerical results of tensor-based GMP models}
	\label{table: comparisonALS}
	\begin{tabular}{|c|c|cccc|cccc|}
		\hline
		\multirow{2}{*}{\begin{tabular}[c]{@{}c@{}}\\ Models\end{tabular}}           & \multirow{2}{*}{\begin{tabular}[c]{@{}c@{}}\\ Tensor\\ ranks\end{tabular}} & \multicolumn{4}{c|}{$\left(M_1, M_2, P\right)=(11,10,8)$}                                                          & \multicolumn{4}{c|}{$\left(M_1, M_2, P\right)=(10,8,6)$}                                                           \\ \cline{3-10} 
		&                                                                         & \multicolumn{1}{c|}{NumofPara} & \multicolumn{1}{c|}{Training time} & \multicolumn{1}{c|}{\begin{tabular}[l]{@{}c@{}}Simulation\\ time\end{tabular}} & NMSE     & \multicolumn{1}{c|}{NumofPara} & \multicolumn{1}{c|}{Training time} & \multicolumn{1}{c|}{\begin{tabular}[l]{@{}c@{}}Simulation\\ time\end{tabular}} & NMSE     \\ \hline
		GMP(LS) & -                                                                       & \multicolumn{1}{c|}{880}       & \multicolumn{1}{c|}{0.1477}        & \multicolumn{1}{c|}{0.7511}       & -49.3635 & \multicolumn{1}{c|}{480}       & \multicolumn{1}{c|}{0.0779}        & \multicolumn{1}{c|}{0.5202}       & -44.4182 \\ \hline
		GMP(LASSO)                        & -                                                                       & \multicolumn{1}{c|}{352}       & \multicolumn{1}{c|}{3.8988}        & \multicolumn{1}{c|}{0.4340}       & -49.1700 & \multicolumn{1}{c|}{$241$}       & \multicolumn{1}{c|}{1.4915}        & \multicolumn{1}{c|}{0.2846}       & -44.2422 \\ \hline
		GMP-CP                            & 3                                                                       & \multicolumn{1}{c|}{87}        & \multicolumn{1}{c|}{0.1763}        & \multicolumn{1}{c|}{0.0801}       & -49.2861 & \multicolumn{1}{c|}{72}        & \multicolumn{1}{c|}{0.1183}        & \multicolumn{1}{c|}{0.0557}       & -44.3296 \\ \hline
		GMP-TT                         & $(2,2)$                                                                 & \multicolumn{1}{c|}{78}   & \multicolumn{1}{c|}{0.2348}        & \multicolumn{1}{c|}{0.1006}       & -49.5338 & \multicolumn{1}{c|}{64}    
		& \multicolumn{1}{c|}{0.1527}        & \multicolumn{1}{c|}{0.0707}       & -44.1102 \\ \hline
		GMP-Tucker                        & $(2,2,2)$                                                               & \multicolumn{1}{c|}{66}   & \multicolumn{1}{c|}{0.6948}        & \multicolumn{1}{c|}{0.1896}       & -49.3578 & \multicolumn{1}{c|}{56}    
		& \multicolumn{1}{c|}{0.5233}        & \multicolumn{1}{c|}{0.1338}       & -44.3128 \\ \hline
	\end{tabular}
\end{table*} 

As demonstrated in Section \ref{sec. 4}, the RP-ALS algorithm is proposed to accelerate the ALS iterations, by embedding the parameter vectors into lower-dimensional spaces via random projections. The more ALS iterations are performed, the more training time is saved. Table~\ref{table: RPALS} records the performance of the RP-ALS algorithm, where the hyper-parameters are consistent with $(M_1, M_2, P)=(11,10,8)$ in Table~\ref{table: comparisonALS}, and the table column `HOSVD' means the running time for the truncated HOSVD on $\mathcal{M}$ for given $(\widetilde{M},\widetilde{P})=(5,3)$, the column named `One ALS iteration' shows the required time for one ALS iteration. We also performed only three ALS iterations, and the NMSE can reach comparable levels. The training time is consumed in the computation of tensors $\mathcal{M},\mathbf{H}$, the randomized STHOSVD process on $\mathcal{M}$, and three ALS iterations. It can be seen that the RP-ALS algorithm needs less time to identify these models than the ALS algorithm, especially for the GMP-Tucker model, since it saves more time per iteration than the other two models. In addition, Fig.~\ref{fig: RP-ALS} (a) shows that the NMSE declines rapidly with 10 ALS iterations. Since the projections $\mathbf{U}_2,\mathbf{U}_3$ are generated from the randomized HOSVD of $\mathcal{M}$, we counted NMSEs after the 10th RP-ALS iteration of 100 independent experiments, and their distributions are plotted in Fig.~\ref{fig: RP-ALS} (b)-(d), which shows that the performances of the RP-ALS method are quite robust.

From the above experiments, our proposed models achieve comparable performances to the original GMP model while greatly reducing the number of parameters and running complexity, and the ALS algorithm and RP-ALS algorithm can efficiently identify these models.

\begin{table}
	\centering
	\caption{Model identification by RP-ALS algorithm}
	\begin{tabular}{|c|c|c|c|c|c|}
		\hline
		Models & \begin{tabular}[c]{@{}c@{}}Tensor\\ ranks\end{tabular} & HOSVD & \begin{tabular}[c]{@{}l@{}}One ALS\\  iteration\end{tabular} & \begin{tabular}[c]{@{}l@{}}Training\\ time\end{tabular} & NMSE     \\ \hline
		GMP-CP  & 3 & 0.0036                                               & 0.0525                                                          & 0.1666                                                  & -49.2822 \\ \hline
		GMP-TT & (2,2)  & 0.0036                                               & 0.0714                                                          & 0.2233                                                  & -49.3685 \\ \hline
		GMP-Tucker & (2,2,2) & 0.0036                                               & 0.1540                                                          & 0.4711                                                  & -49.1266 \\ \hline
	\end{tabular}
	\label{table: RPALS}
\end{table}

\begin{figure}[htbp!]
	\centering
	\subfloat[]{
		\includegraphics[width=0.24\textwidth]{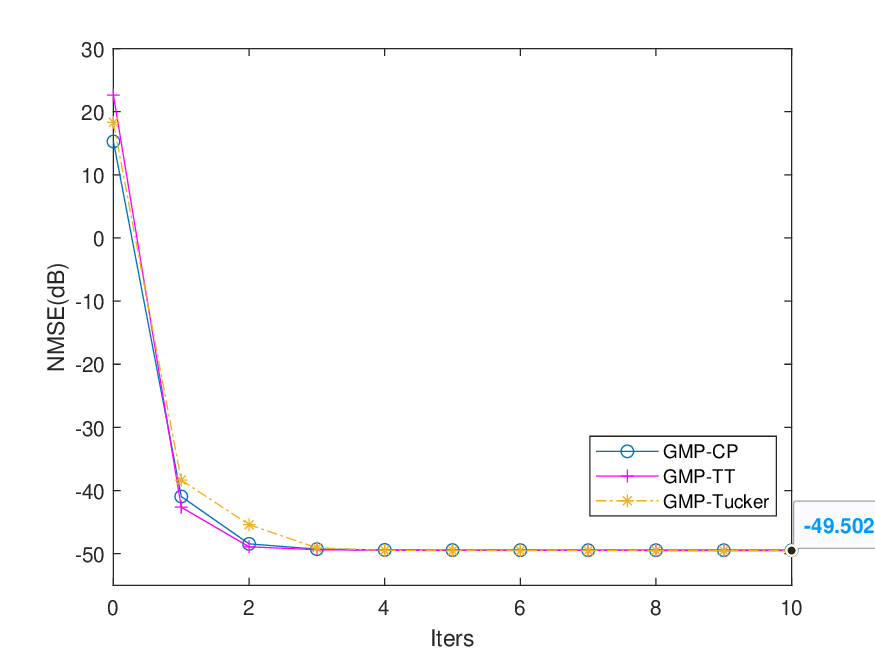}} 
	\subfloat[]{
		\includegraphics[width=0.24\textwidth]{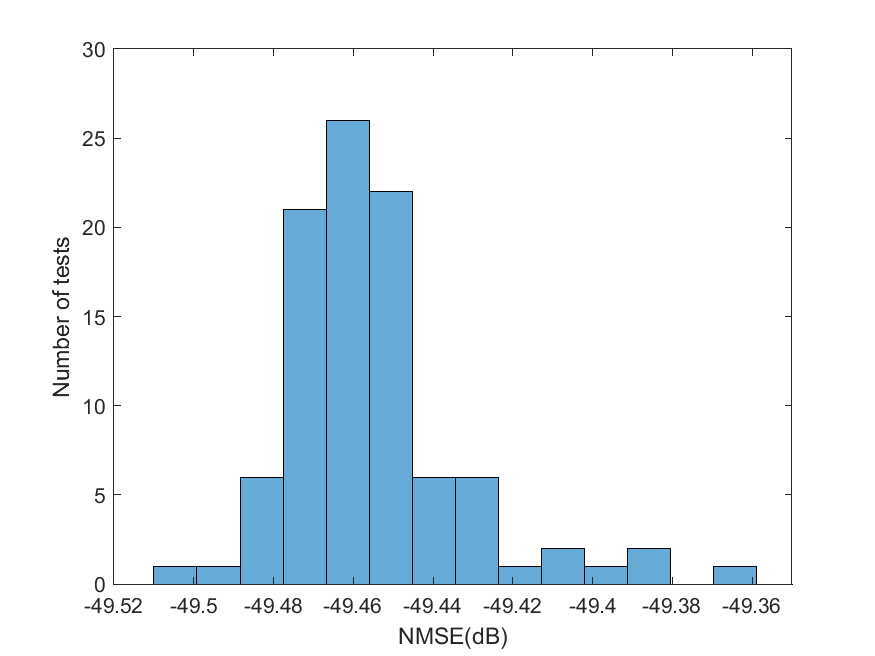}} \\
	\subfloat[]{
		\includegraphics[width=0.24\textwidth]{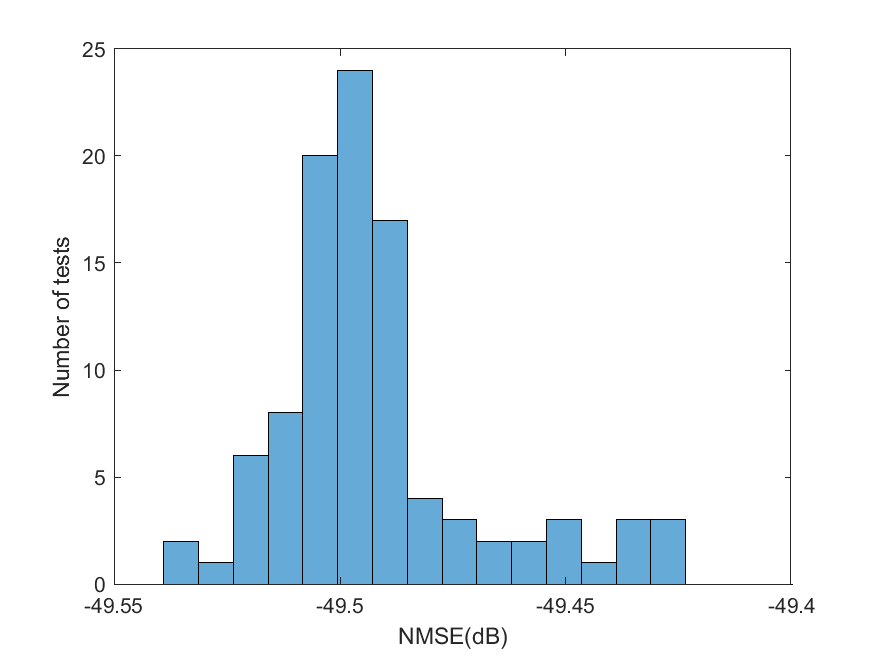}} 
	\subfloat[]{
		\includegraphics[width=0.24\textwidth]{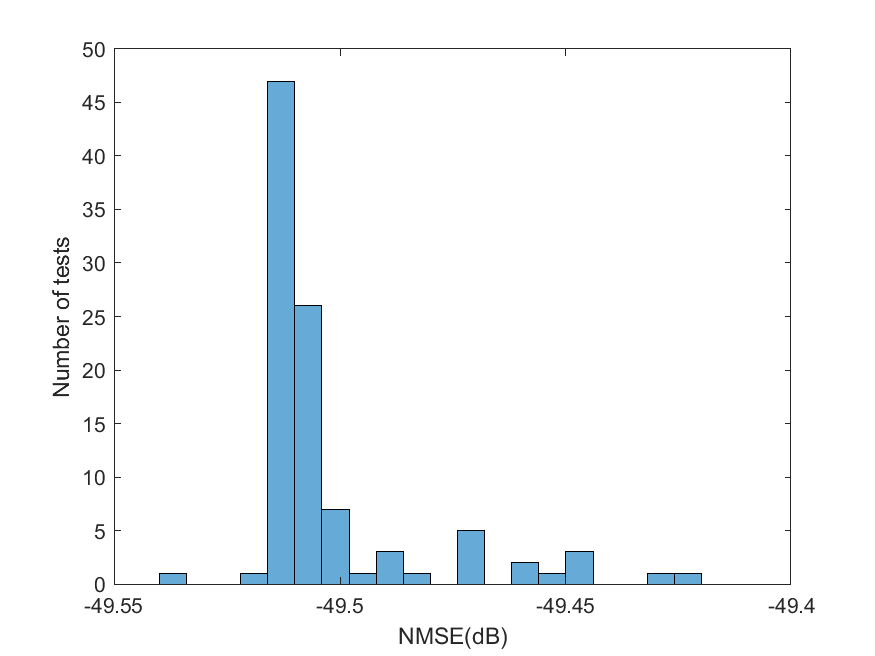}}
	\caption{The hyper-parameters $(M_1,M_2,P)=(11,10,8)$. (a) The NMSE of three tensor-based GMP models trained by RP-ALS algorithm. (b)-(d) The NMSE distributions of GMP-CP, GMP-TT and GMP-Tucker models trained by RP-ALS algorithm, respectively. }
	\label{fig: RP-ALS} 
\end{figure}

\section{Conclusions}
\label{sec. 6}

By utilizing the tensor structure of the unknown coefficients of the GMP model, the proposed tensor-based GMP models can greatly reduce the number of parameters to characterize the PA systems, while maintaining comparable performance. Moreover, the running complexity of these models is much lower than that of the GMP model. We deduce the ALS method to efficiently identify the model parameters in an iterative manner. In addition, we propose the RP-ALS algorithm to accelerate the alternating iterations through a randomized projection technique and confirm the robustness of this algorithm theoretically and experimentally. Numerical results demonstrate the merits of these models in terms of the number of parameters and running complexity.

\section*{Acknowledgements}
The authors would like to thank the handling editor and referees for their detailed comments.

\bibliographystyle{plain}
	
\bibliography{ref}

\begin{thebibliography}{10}

\bibitem{abdelhafiz2018generalized}
A.~Abdelhafiz, L.~Behjat, and F.~M. Ghannouchi.
\newblock Generalized memory polynomial model dimension selection using
  particle swarm optimization.
\newblock {\em IEEE Microwave and Wireless Components Letters}, 28(2):96--98,
  2018.

\bibitem{abdelhafiz2014digital}
A.~Abdelhafiz, A.~Kwan, O.~Hammi, and F.~M. Ghannouchi.
\newblock {Digital predistortion of LTE-A power amplifiers using
  compressed-sampling-based unstructured pruning of Volterra series}.
\newblock {\em IEEE Transactions on Microwave Theory and Techniques},
  62(11):2583--2593, 2014.

\bibitem{Alina2016603}
Z.~Alina and O.~Amrani.
\newblock On digital post-distortion techniques.
\newblock {\em IEEE Transactions on Signal Processing}, 64(3):603 – 614,
  2016.

\bibitem{Brett2023}
B.~W. Bader, T.~G. Kolda, et~al.
\newblock {Tensor Toolbox for MATLAB, Version 3.6}.
\newblock {\em www.tensortoolbox.org}, September 28, 2023.

\bibitem{Batselier2016}
K.~Batselier, Z.~Chen, H.~Liu, and N.~Wong.
\newblock A tensor-based volterra series black-box nonlinear system
  identification and simulation framework.
\newblock In {\em 2016 IEEE/ACM International Conference on Computer-Aided
  Design (ICCAD)}, pages 1--7. IEEE, 2016.

\bibitem{Batselier201726}
K.~Batselier, Z.~Chen, and N.~Wong.
\newblock {Tensor Network alternating linear scheme for MIMO Volterra system
  identification}.
\newblock {\em Automatica}, 84:26 – 35, 2017.

\bibitem{becerra2020sparse}
J.~A. Becerra, M.~J.~M. Ayora, J.~Reina-Tosina, and C.~Crespo-Cadenas.
\newblock {Sparse identification of Volterra models for power amplifiers
  without pseudoinverse computation}.
\newblock {\em IEEE Transactions on Microwave Theory and Techniques},
  68(11):4570--4578, 2020.

\bibitem{becerra2018doubly}
J.~A. Becerra, M.~J. Madero-Ayora, J.~Reina-Tosina, C.~Crespo-Cadenas,
  J.~Garc{\'\i}a-Fr{\'\i}as, and G.~Arce.
\newblock A doubly orthogonal matching pursuit algorithm for sparse
  predistortion of power amplifiers.
\newblock {\em IEEE Microwave and Wireless Components Letters}, 28(8):726--728,
  2018.

\bibitem{beck2009fast}
A.~Beck and M.~Teboulle.
\newblock A fast iterative shrinkage-thresholding algorithm for linear inverse
  problems.
\newblock {\em SIAM journal on imaging sciences}, 2(1):183--202, 2009.

\bibitem{Brazell2013}
M.~Brazell, N.~Li, C.~Navasca, and C.~Tamon.
\newblock Solving multilinear systems via tensor inversion.
\newblock {\em SIAM Journal on Matrix Analysis and Applications},
  34(2):542--570, 2013.

\bibitem{chang2022general}
S.~Y. Chang and H.-C. Wu.
\newblock {General Tensor Least-Mean-Squares Filter for Multi-Channel
  Multi-Relational Signals}.
\newblock {\em IEEE Transactions on Signal Processing}, 70:6257--6271, 2022.

\bibitem{che2019randomized}
M.~Che and Y.~Wei.
\newblock {Randomized algorithms for the approximations of Tucker and the
  tensor train decompositions}.
\newblock {\em Advances in Computational Mathematics}, 45(1):395--428, 2019.

\bibitem{che2023efficient}
M.~Che, Y.~Wei, and H.~Yan.
\newblock {Efficient algorithms for Tucker decomposition via approximate matrix
  multiplication}.
\newblock {\em arXiv 2303.11612}, 2023.

\bibitem{chen2017tensor}
C.~Chen, K.~Batselier, M.~Telescu, S.~Azou, N.~Tanguy, and N.~Wong.
\newblock Tensor-network-based predistorter design for multiple-input
  multiple-output nonlinear systems.
\newblock In {\em 2017 IEEE 12th International Conference on ASIC (ASICON)},
  pages 1117--1120. IEEE, 2017.

\bibitem{chen2021modified}
D.~Chen, Y.~Zhao, Y.~Zhou, Q.~Ji, Z.~Chen, and F.~Tian.
\newblock {Modified Generalized Memory Polynomial Model of RF Power
  Amplifiers}.
\newblock In {\em 2021 IEEE 4th International Conference on Electronic
  Information and Communication Technology (ICEICT)}, pages 317--319. IEEE,
  2021.

\bibitem{cichocki2015tensor}
A.~Cichocki, D.~Mandic, L.~De~Lathauwer, G.~Zhou, Q.~Zhao, C.~Caiafa, and H.~A.
  Phan.
\newblock {Tensor decompositions for signal processing applications: From
  two-way to multiway component analysis}.
\newblock {\em IEEE signal processing magazine}, 32(2):145--163, 2015.

\bibitem{DeLathauwer20001253}
L.~De~Lathauwer, B.~De~Moor, and J.~Vandewalle.
\newblock A multilinear singular value decomposition.
\newblock {\em SIAM Journal on Matrix Analysis and Applications},
  21(4):1253--1278, 2000.

\bibitem{deepak2020identification}
P.~Deepak, C.~V. Lohith, D.~Ramgopal, R.~S. Devi, and D.~G. Kurup.
\newblock Identification of optimal generalized memory polynomial structure of
  {RF} power amplifiers using artificial bee colony optimization.
\newblock In {\em 2020 International Conference on Communication and Signal
  Processing (ICCSP)}, pages 1331--1334. IEEE, 2020.

\bibitem{ding2004robust}
L.~Ding, G.~T. Zhou, D.~R. Morgan, Z.~Ma, J.~S. Kenney, J.~Kim, and C.~R.
  Giardina.
\newblock A robust digital baseband predistorter constructed using memory
  polynomials.
\newblock {\em IEEE Transactions on communications}, 52(1):159--165, 2004.

\bibitem{favier2009parametric}
G.~Favier and T.~Bouilloc.
\newblock {Parametric complexity reduction of Volterra models using tensor
  decompositions}.
\newblock In {\em 2009 17th European Signal Processing Conference}, pages
  2288--2292. IEEE, 2009.

\bibitem{Favier201230}
G.~Favier, A.~Y. Kibangou, and T.~Bouilloc.
\newblock {Nonlinear system modeling and identification using Volterra-PARAFAC
  models}.
\newblock {\em International Journal of Adaptive Control and Signal
  Processing}, 26(1):30 – 53, 2012.

\bibitem{ghannouchi2009behavioral}
F.~M. Ghannouchi and O.~Hammi.
\newblock Behavioral modeling and predistortion.
\newblock {\em IEEE Microwave magazine}, 10(7):52--64, 2009.

\bibitem{hitchcock1927expression}
F.~L. Hitchcock.
\newblock The expression of a tensor or a polyadic as a sum of products.
\newblock {\em Journal of Mathematics and Physics}, 6(1-4):164--189, 1927.

\bibitem{Holtz2012701}
S.~Holtz, T.~Rohwedder, and R.~Schneider.
\newblock {On manifolds of tensors of fixed TT-rank}.
\newblock {\em Numerische Mathematik}, 120(4):701--731, 2012.

\bibitem{kim2001digital}
J.~Kim and K.~Konstantinou.
\newblock Digital predistortion of wideband signals based on power amplifier
  model with memory.
\newblock {\em Electronics Letters}, 37(23):1, 2001.

\bibitem{kolda2009tensor}
T.~G. Kolda and B.~W. Bader.
\newblock Tensor decompositions and applications.
\newblock {\em SIAM Review}, 51(3):455--500, 2009.

\bibitem{lee2021qttnet}
D.~Lee, D.~Wang, Y.~Yang, L.~Deng, G.~Zhao, and G.~Li.
\newblock {QTTNet: Quantized tensor train neural networks for 3D object and
  video recognition}.
\newblock {\em Neural Networks}, 141:420--432, 2021.

\bibitem{li2016sparsity}
M.~Li, Z.~Yang, Z.~Zhang, R.~Li, Q.~Dong, and S.~Nakatake.
\newblock Sparsity adaptive estimation of memory polynomial based models for
  power amplifier behavioral modeling.
\newblock {\em IEEE Microwave and Wireless Components Letters}, 26(5):370--372,
  2016.

\bibitem{liu2013robust}
Y.-J. Liu, J.~Zhou, W.~Chen, and B.-H. Zhou.
\newblock {A robust augmented complexity-reduced generalized memory polynomial
  for wideband RF power amplifiers}.
\newblock {\em IEEE Transactions on Industrial Electronics}, 61(5):2389--2401,
  2013.

\bibitem{luo2021adjusting}
X.~Luo, M.~Chen, H.~Wu, Z.~Liu, H.~Yuan, and M.~Zhou.
\newblock {Adjusting learning depth in nonnegative latent factorization of
  tensors for accurately modeling temporal patterns in dynamic QoS data}.
\newblock {\em IEEE Transactions on Automation Science and Engineering},
  18(4):2142--2155, 2021.

\bibitem{minster2020randomized}
R.~Minster, A.~K. Saibaba, and M.~E. Kilmer.
\newblock {Randomized algorithms for low-rank tensor decompositions in the
  Tucker format}.
\newblock {\em SIAM Journal on Mathematics of Data Science}, 2(1):189--215,
  2020.

\bibitem{mondal2013genetic}
R.~Mondal, T.~Ristaniemi, and M.~Doula.
\newblock {Genetic algorithm optimized memory polynomial digital pre-distorter
  for RF power amplifiers}.
\newblock In {\em 2013 International Conference on Wireless Communications and
  Signal Processing}, pages 1--5. IEEE, 2013.

\bibitem{Morgan20063852}
D.~R. Morgan, Z.~Ma, J.~Kim, M.~G. Zierdt, and J.~Pastalan.
\newblock {A generalized memory polynomial model for digital predistortion of
  RF power amplifiers}.
\newblock {\em IEEE Transactions on Signal Processing}, 54(10):3852 – 3860,
  2006.

\bibitem{oseledets2010tt}
I.~Oseledets and E.~Tyrtyshnikov.
\newblock {TT-cross approximation for multidimensional arrays}.
\newblock {\em Linear Algebra and its Applications}, 432(1):70--88, 2010.

\bibitem{oseledets2011tensor}
I.~V. Oseledets.
\newblock Tensor-train decomposition.
\newblock {\em SIAM Journal on Scientific Computing}, 33(5):2295--2317, 2011.

\bibitem{reina2015behavioral}
J.~Reina-Tosina, M.~Allegue-Mart{\'\i}nez, C.~Crespo-Cadenas, C.~Yu, and
  S.~Cruces.
\newblock Behavioral modeling and predistortion of power amplifiers under
  sparsity hypothesis.
\newblock {\em IEEE Transactions on Microwave Theory and Techniques},
  63(2):745--753, 2015.

\bibitem{reynolds2016randomized}
M.~J. Reynolds, A.~Doostan, and G.~Beylkin.
\newblock {Randomized alternating least squares for canonical tensor
  decompositions: Application to a PDE with random data}.
\newblock {\em SIAM Journal on Scientific Computing}, 38(5):A2634--A2664, 2016.

\bibitem{Rohwedder20131134}
T.~Rohwedder and A.~Uschmajew.
\newblock On local convergence of alternating schemes for optimization of
  convex problems in the tensor train format.
\newblock {\em SIAM Journal on Numerical Analysis}, 51(2):1134 – 1162, 2013.

\bibitem{shen2022robust}
B.~Shen, R.~R. Kamath, H.~Choo, and Z.~Kong.
\newblock Robust tensor decomposition based background/foreground separation in
  noisy videos and its applications in additive manufacturing.
\newblock {\em IEEE Transactions on Automation Science and Engineering},
  20(1):583--596, 2022.

\bibitem{sidiropoulos2017tensor}
N.~D. Sidiropoulos, L.~De~Lathauwer, X.~Fu, K.~Huang, E.~E. Papalexakis, and
  C.~Faloutsos.
\newblock Tensor decomposition for signal processing and machine learning.
\newblock {\em IEEE Transactions on signal processing}, 65(13):3551--3582,
  2017.

\bibitem{tehrani2010comparative}
A.~S. Tehrani, H.~Cao, S.~Afsardoost, T.~Eriksson, M.~Isaksson, and C.~Fager.
\newblock A comparative analysis of the complexity/accuracy tradeoff in power
  amplifier behavioral models.
\newblock {\em IEEE Transactions on Microwave Theory and Techniques},
  58(6):1510--1520, 2010.

\bibitem{tellado2003maximum}
J.~Tellado, L.~M. Hoo, and J.~M. Cioffi.
\newblock Maximum-likelihood detection of nonlinearly distorted multicarrier
  symbols by iterative decoding.
\newblock {\em IEEE transactions on communications}, 51(2):218--228, 2003.

\bibitem{tucker1966some}
L.~R. Tucker.
\newblock Some mathematical notes on three-mode factor analysis.
\newblock {\em Psychometrika}, 31(3):279--311, 1966.

\bibitem{Vannieuwenhoven2012A1027}
N.~Vannieuwenhoven, R.~Vandebril, and K.~Meerbergen.
\newblock A new truncation strategy for the higher-order singular value
  decomposition.
\newblock {\em SIAM Journal on Scientific Computing}, 34(2):A1027--A1052, 2012.

\bibitem{wang2023pruning}
D.~Wang, Y.~Lei, and L.~Zeng.
\newblock {A pruning method of the generalized memory polynomial model for
  power amplifiers based on the LASSO regression}.
\newblock In {\em 2023 IEEE 6th International Conference on Electronic
  Information and Communication Technology (ICEICT)}, pages 182--185. IEEE,
  2023.

\bibitem{wang2021augmented}
F.~Wang, M.~R. Gahrooei, Z.~Zhong, T.~Tang, and J.~Shi.
\newblock An augmented regression model for tensors with missing values.
\newblock {\em IEEE Transactions on Automation Science and Engineering},
  19(4):2968--2984, 2021.

\bibitem{wang2018novel}
S.~Wang, M.~Abi~Hussein, O.~Venard, and G.~Baudoin.
\newblock A novel algorithm for determining the structure of digital
  predistortion models.
\newblock {\em IEEE Transactions on Vehicular Technology}, 67(8):7326--7340,
  2018.

\bibitem{wei2016theory}
Y.~Wei and W.~Ding.
\newblock {\em {Theory and Computation of Tensors: Multi-dimensional Arrays}}.
\newblock Academic Press, 2016.

\bibitem{2014Behavioral}
J.~Wood.
\newblock {\em {Behavioral Modeling and Linearization of RF Power Amplifiers}}.
\newblock Artech House, Boston, 2014.

\bibitem{Xu20131758}
Y.~Xu and W.~Yin.
\newblock A block coordinate descent method for regularized multiconvex
  optimization with applications to nonnegative tensor factorization and
  completion.
\newblock {\em SIAM Journal on Imaging Sciences}, 6(3):1758 – 1789, 2013.

\end{thebibliography}

\begin{IEEEbiography}[{\includegraphics[width=1in,height=1.25in,clip,keepaspectratio]{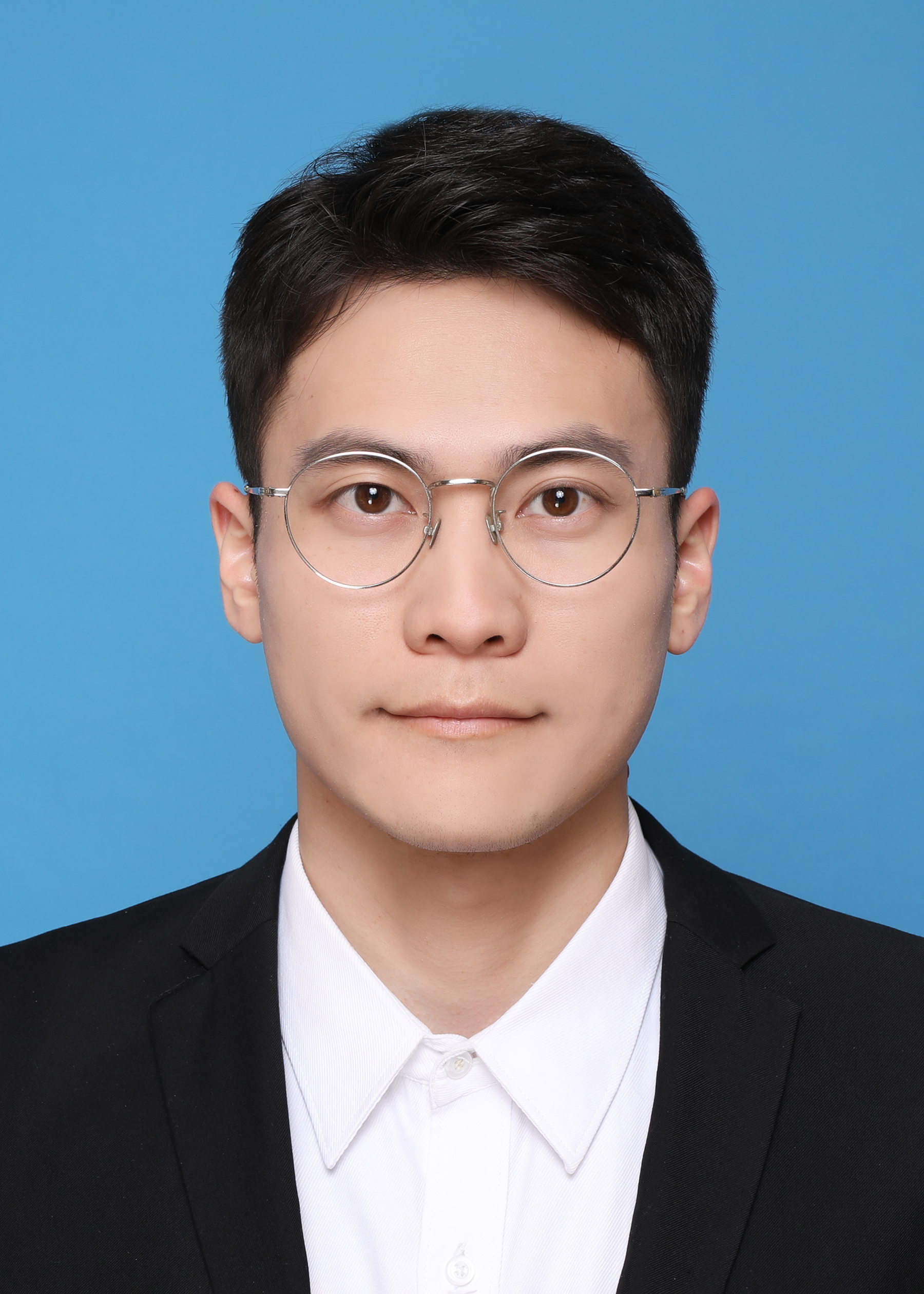}}]{Yuchao Wang} received the B.S. degree in mathematics and applied mathematics from China University of Geosciences, Wuhan, China, in 2018 and the M.S. degree in computational mathematics from the University of Chinese Academy of Sciences, Beijing, China, in 2021. He is currently pursuing the Ph.D. degree in computational mathematics at Fudan University in Shanghai, China. His current research interests include tensor computation and randomized algorithms with their applications.	
\end{IEEEbiography}

\begin{IEEEbiography}[{\includegraphics[width=1in,height=1.25in,clip,keepaspectratio]{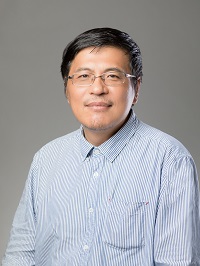}}]{Yimin Wei} received the B.S. degree in computational mathematics from Shanghai Normal University of Shanghai in China, and Ph.D. degree in computational mathematics from Fudan University of Shanghai in China, 1991 and 1997, respectively. 

He was a Lecturer with the Department of Mathematics, Fudan University, from 1997 to 2000, and a Visiting Scholar with the Division of Engineering and Applied Science, Harvard University,	Boston, MA, USA, from 2000 to 2001. From 2001 to 2005, he was an Associate Professor with the School of Mathematical Sciences, Fudan University. He is currently a full Professor with the School of Mathematical Sciences, Fudan University. His current research interests include multilinear algebra and numerical linear algebra with its applications. He is the author of more than 150 technical journal papers and 5 monographs published by Elsevier, Springer, World Scientific and Science Press.
\end{IEEEbiography}

\vfill

\end{document}